\tikzset{every picture/.style=semithick,
}
\tikzset{
    >=stealth,
    possible world/.style={circle,draw,thick,align=center},
    real world/.style={double,circle,draw,thick,align=center},
}
\begin{document}

\newcommand\modified[1]{#1}
\newcommand\bmodified[1]{#1}

\newcommand\frombaa[1]{\textcolor{red}{FROM BAA: #1}}
\newcommand\jyo[1]{\textcolor{blue}{Jyo: #1}}
\newcommand\chao[1]{\textcolor{blue}{Chao: #1}}
\newcommand\ufuk[1]{\textcolor{green}{Ufuk: #1}}
\newcommand\scott[1]{\textcolor{green}{Scott: #1}}
\newcommand\georgios[1]{\textcolor{brown}{Georgios: #1}}

\newcommand{\mypara}[1]{\vspace{0.5em} \noindent {\bf #1}.}
\newcommand{\myipara}[1]{\vspace{0.4em} \noindent {\em #1}.}

\newcommand{\lecps}{{\sc le}-{\sc cps}\xspace}
\newcommand{\lecpss}{{\sc le}-{\sc cps}\xspace}
\newcommand{\lecs}{{\sc lec}{\it s}\xspace}
\newcommand{\lec}{{\sc lec}\xspace}

\newcommand{\ignore}[1]{}

\newcommand\smin{\textcolor{red}{ins}}
\newcommand\smh{\textcolor{red}{hhs}}
\newcommand\smop{\textcolor{red}{ops}}
\newcommand\medin{\textcolor{red}{inm}}
\newcommand\medh{\textcolor{red}{hhm}}
\newcommand\medop{\textcolor{red}{opm}}
\newcommand\lgin{\textcolor{red}{inl}}
\newcommand\lgh{\textcolor{red}{hhl}}
\newcommand\lgop{\textcolor{red}{opl}}
\newcommand{\cur}{{\bf **}}
\newcommand{\current}{C}
\newcommand{\pending}{P}

\newcommand{\Phat}{\hat{P}}

\newcommand{\Paths}[2]{\mathit{Paths}^{#1}{(#2)}}
\newcommand{\fPaths}[2]{\mathit{Paths}^{#1}_{\mathit{fin}}{(#2)}}
\newcommand{\dom}{\mathit{dom}}
\newcommand{\dtmc}{\mathcal{M}}
\newcommand{\LTL}{\textsf{\small LTL}\xspace}
\newcommand{\CTL}{\textsf{\small CTL}\xspace}
\newcommand{\CTLstar}{\textsf{\small CTL$^*$}\xspace}
\newcommand{\PCTL}{\textsf{\small PCTL}\xspace}
\newcommand{\PCTLstar}{\textsf{\small PCTL$^*$}\xspace}
\newcommand{\HyperPCTL}{\textsf{\small HyperPCTL}\xspace}
\newcommand{\HyperLTL}{\textsf{\small HyperLTL}\xspace}
\newcommand{\HyperCTLstar}{\textsf{\small HyperCTL$^*$}\xspace}
\newcommand{\AFHyperLTL}{\mbox{AF-HyperLTL}\xspace}
\newcommand{\matching}{\mathcal{M}}
\newcommand{\topolgy}{\mathcal{T}}

\newcommand{\alphabet}{\mathrm{\Sigma}}
\newcommand{\states}{\mathrm{\Sigma}}
\newcommand{\statespace}{\states}
\newcommand{\Trace}{\mathsf{Traces}}
\newcommand{\trace}{t}
\newcommand{\qtrace}{\eta}
\newcommand{\sform}{\mathrm{\Phi}}
\newcommand{\pform}{\varphi}

\newcommand{\naturals}{\mathbb{N}_{>0}}
\newcommand{\naturalszero}{\mathbb{N}_{\geq 0}}

\newcommand{\AP}{\mathsf{AP}}

\newcommand{\Next}{\X}
\newcommand{\Finally}{\F}
\newcommand{\Globally}{\G}
\newcommand{\V}{\mathcal{V}}

\newcommand{\pr}{\mathbb{P}}
\renewcommand{\Pr}{\mathit{Pr}}

\newcommand{\emptyword}{\epsilon}

\newcommand{\init}{\mathit{init}}
\newcommand{\tpm}{\mathbf{P}}
\newcommand{\quant}{\mathbb{Q}}

\newcommand{\dbsim}{\mathit{dbSim}}
\newcommand{\res}{\mathit{res}}
\newcommand{\qout}{\mathit{qOut}}
\newcommand{\env}{\mathit{env}}
\newcommand{\fail}{\mathit{fail}}

\newcommand{\comp}[1]{\textsf{\small #1}}

\newcommand{\sigmakp}{$\mathsf{\Sigma^p_k}$\comp{-complete}\xspace}
\newcommand{\pikp}{$\mathsf{\Pi^p_k}$\comp{-complete}\xspace}

\newcommand\donotshow[1]{}

\newcommand{\ie}{i.e.\xspace}
\newcommand{\shield}{SHIELD\xspace}
\newcommand{\lestl}{\leccomp[{\sc stl}]}
\newcommand{\mulf}[1]{\multicolumn{2}{l}{#1}}

\newcommand{\z}{\cellcolor{black}}
\newcommand{\x}{\cellcolor{lightgray}}

\newcommand{\tru}{\mathsf{true}}
\newcommand{\false}{\mathsf{false}}
\newcommand{\inn}{\mathsf{in}}
\newcommand{\out}{\mathsf{out}}
\newcommand{\suffix}[2]{#1[#2,\infty]}
\newcommand{\F}{\LTLdiamond}
\newcommand{\G}{\LTLsquare}
\newcommand{\U}{\,\mathcal U\,}
\newcommand{\W}{\,\mathcal W\,}
\newcommand{\X}{\LTLcircle}

\newcommand{\States}{S}
\newcommand{\state}{s}
\newcommand{\trans}{\delta}
\newcommand{\kframe}{\mathcal{F}}
\newcommand{\krip}{\mathcal{K}}
\newcommand{\ktuple}{\langle S, s_\init, \trans, L \rangle}
\newcommand{\ktupleprime}{\langle S', s'_\init, \trans', L' \rangle}
\newcommand{\lang}{\mathcal{L}}

\newcommand{\pos}{\mathit{pos}}
\newcommand{\negt}{\mathit{neg}}
\newcommand{\PR}[2]{\mbox{\sf \small PR[#1, #2{}]}\xspace}

\newcommand{\GMNI}{\textsf{\small GMNI}\xspace}
\newcommand{\GNI}{\textsf{\small GNI}\xspace}

\pgfdeclarelayer{background}
\pgfdeclarelayer{foreground}
\pgfsetlayers{background,main,foreground}

\tikzset{
  invisible/.style={opacity=0, text opacity=0},
  visible on/.style={alt={#1{}{invisible}}},
  alt/.code args={<#1>#2#3}{%
    \alt<#1>{\pgfkeysalso{#2}}{\pgfkeysalso{#3}} 
the path
  },
}

\newcommand\enrec[1]{%
  \tikz[baseline=(X.base)]
    \node (X) [draw, shape=circle, inner sep=0, fill=white] {\strut #1};}

\newcommand\encirclew[1]{%
  \tikz[baseline=(X.base)]
    \node (X) [draw, shape=circle, inner sep=0, fill=white] {\strut #1};}

\newcommand\encirclegr[1]{%
  \tikz[baseline=(X.base)]
    \node (X) [draw, shape=circle, inner sep=0, fill=gray] {\strut #1};}

 \newcommand\encirclep[1]{%
  \tikz[baseline=(X.base)]
    \node (X) [draw, shape=circle, inner sep=0, fill=pink] {\strut #1};}

\newcommand\encircle[1]{%
  \tikz[baseline=(X.base)]
    \node (X) [draw, shape=circle, inner sep=0, fill=green] {\strut #1};}

\newcommand\encircley[1]{%
  \tikz[baseline=(X.base)]
    \node (X) [draw, shape=circle, inner sep=0, fill=yellow] {\strut #1};}

\newcommand\encircler[1]{%
  \tikz[baseline=(X.base)]
    \node (X) [draw, shape=circle, inner sep=0, fill=red] {\strut #1};}

\tikzstyle{place}=[circle,thick,draw=blue!75,fill=blue!20,minimum size=6mm]
\tikzstyle{red place}=[place,draw=red!75,fill=red!20]
\tikzstyle{transition}=[rectangle,thick,draw=black, fill=black, 
minimum size=1mm]

\newcommand{\dec}{\mathtt{dec}}
\newcommand{\ses}{\mathtt{ses}}
\newcommand{\session}{\mathtt{session}}
\newcommand{\status}{\mathtt{status}}
\newcommand{\ntf}{\mathtt{ntf}}

\definecolor{mGreen}{rgb}{0,0.6,0}
\definecolor{mGray}{rgb}{0.5,0.5,0.5}
\definecolor{mPurple}{rgb}{0.58,0,0.82}
\definecolor{backgroundColour}{rgb}{0.95,0.95,0.92}

\lstdefinestyle{CStyle}{
    backgroundcolor=\color{backgroundColour},   
    commentstyle=\color{mGreen},
    keywordstyle=\color{magenta},
    numberstyle=\tiny\color{mGray},
    stringstyle=\color{mPurple},
    basicstyle=\footnotesize,
    breakatwhitespace=false,         
    breaklines=true,                 
    captionpos=b,                    
    keepspaces=true,                 
    numbers=left,                    
    numbersep=2pt,                  
    showspaces=false,                
    showstringspaces=false,
    showtabs=false,                  
    tabsize=2,
    language=C
}


\newcommand{\eab}[1]{{\color{red}#1}}

\renewcommand{\topfraction}{0.96}
\renewcommand{\bottomfraction}{0.95}
\renewcommand{\textfraction}{0.1}
\renewcommand{\floatpagefraction}{1}
\renewcommand{\dbltopfraction}{.97}
\renewcommand{\dblfloatpagefraction}{.99}

\title{Program Repair for Hyperproperties}

\author{
  Borzoo Bonakdarpour\inst{1} 
  \and
  Bernd Finkbeiner\inst{2} 
}
%
%
\institute{
Iowa State University, USA, \email{borzoo@iastate.edu} 
\and
Saarland University, Germany, \email{finkbeiner@cs.uni-saarland.de}
}



\maketitle

\begin{abstract}
We study the repair problem for hyperproperties specified in the
temporal logic HyperLTL. Hyperproperties are system properties that
relate multiple computation traces. This class of properties includes
information flow policies like noninterference and observational
determinism.  The repair problem is to find, for a given Kripke
structure, a substructure that satisfies a given specification.
%
We show that the
repair problem is decidable for HyperLTL specifications and finite-state Kripke structures. We
provide a detailed complexity analysis for different fragments of
HyperLTL and different system types: tree-shaped, acyclic, and
general Kripke structures.

\end{abstract} 

\section{Introduction}
\label{sec:intro}

{\em Information-flow security} is concerned with the detection of unwanted 
flows of information from a set of variables deemed as secrets to 
another set of variables that are publicly observable. 
Information-flow security is foundational for some of 
the pillars of cybersecurity such as confidentiality, secrecy, and privacy.
Information-flow properties belong to the class of hyperproperties~\cite{cs10},
which generalize trace properties to sets of sets of traces.
Trace properties are usually insufficient, because information-flow properties
relate multiple executions. This also means that classic trace-based 
specification languages such as linear-time temporal logic (LTL) cannot be 
used directly to specify information-flow properties.
HyperLTL~\cite{cfkmrs14} is an extension of LTL with trace 
variables and quantifiers. HyperLTL can express information-flow properties
by simultaneously referring to multiple traces.
For example, \emph{noninterference}~\cite{gm82}
between a secret input $h$ and a public output $o$ can be specified in HyperLTL by stating that, for all pairs of traces $\pi$ and $\pi'$,  if the input is the same for all input variables $I$ except $h$, then the output
$o$ must be the same at all times:
\[
\forall\pi.\forall\pi'.~ \G \big(\!\!\!\bigwedge_{i\in I\setminus \{h\}}\! i_\pi = i_{\pi'}\big) ~\Rightarrow~ \G\, (o_\pi = o_{\pi'})
\]
Another prominent example is \emph{generalized noninterference} (GNI)~\cite{McCullough:1987:GNI},
which can be expressed as the following HyperLTL formula:
\[
\forall\pi.\forall\pi'.\exists\pi''.~\G\, (h_\pi = h_{\pi''}) ~\wedge~ \G\, (o_{\pi'} = o_{\pi''})
\]
The existential quantifier is needed to allow for nondeterminism. Generalized noninterference permits nondeterminism in the
low-observable behavior, but stipulates that low-security outputs may
not be altered by the injection of high-security inputs.

There has been a lot of recent progress in automatically
\emph{verifying}~\cite{frs15,FinkbeinerMSZ-CCS17,10.1007/978-3-319-96145-3_8,10.1007/978-3-030-25540-4_7}
and \emph{monitoring}~\cite{ab16,Finkbeiner2019,bsb17,bss18,fhst18,sssb19,10.1007/978-3-030-17465-1_7} HyperLTL specifications. The
automatic \emph{construction} of systems that satisfy a given set of
information-flow properties is still, however, in its infancy.  So
far, the only known approach is bounded synthesis~\cite{fhlst18,10.1007/978-3-030-25540-4_7},
which searches for an implementation up to a given bound on the number
of states.  While there has been some success in applying bounded
synthesis to systems like the dining cryptographers~\cite{journals/cacm/Chaum85}, this approach
does not yet scale to larger systems.  The general synthesis problem
(without the bound on the number of states) becomes undecidable as soon
as the HyperLTL formula contains two universal
quantifiers~\cite{fhlst18}. A less complex type of synthesis is {\em program 
repair}, where, given a model $\krip$ and a property $\varphi$, the 
goal is to construct a model $\krip'$, 
such that (1) any execution of $\krip'$ is also an execution of $\krip$, and 
(2) $\krip'$ satisfies $\varphi$. A useful application of program repair 
is {\em program sketching}, where the developer provides a program with 
``holes'' 
that are filled in by the synthesis algorithm~\cite{s13}. Filling a hole in a 
program sketch is a repair step that eliminates nondeterminism. While such a 
repair is guaranteed to preserve trace properties, it is well known that this 
is not the case in the context of information-flow security 
policies~\cite{j89}. In fact, this problem has not yet
been studied in the context of hyperproperties.

In this paper, we study the problem of automated program repair of 
finite-state systems with respect to HyperLTL specifications.
We provide a detailed analysis of the complexity of the repair problem
for different shapes of the structure:
we are interested in {\em general}, {\em acyclic}, and {\em tree-shaped} Kripke 
structures. The need for investigating the repair problem for tree-shaped 
and acyclic graphs stems from two reasons. First, many trace logs that can be 
used as a basis for example-based synthesis~\cite{at17} and repair are in 
the form of a simple linear collection of the traces seen so far. Or, for space 
efficiency, the traces are organized by common prefixes and assembled into a 
tree-shaped Kripke 
structure, or by common prefixes as well as suffixes assembled into an acyclic 
Kripke structure. The second reason is that tree-shaped and acyclic Kripke 
structures often occur as the natural representation of the state space of a 
protocol. For example, certain security protocols, such as authentication and 
session-based protocols (e.g., TLS, SSL, SIP) go through a finite sequence of 
\emph{phases}, resulting in an acyclic Kripke structure.

\begin{table*}[t]
\def\arraystretch{2.1}
\centering
\scalebox{0.85}{\newcolumntype{K}[1]{>{\centering\arraybackslash}p{#1}}
\begin{tabular}{|K{2cm}||K{3cm}||K{3cm}|K{.5cm}||K{4cm}|K{.5cm}|}
\hline

{\bf HyperLTL fragment} & {\bf Tree} & \multicolumn{2}{c||}{\bf Acyclic} & 
\multicolumn{2}{c|}{\bf General}\\

\hline\hline
$\mbox{E}^*$ & \multirow{4}{*}{\parbox[c]{2cm}{\centering \comp{L-complete}\\ 
{\em (Theorem~\ref{thrm:sys-tree-ea})}}} & 
\multicolumn{2}{c||}{\multirow{3}{*}{\parbox[c]{2cm}{\centering
\comp{NL-complete} \\ {\em (Theorems~\ref{thm:sys-acyc-e} and \ref{thm:sys-acyc-a})}}}} & \multicolumn{2}{c|}{\parbox[c]{2cm}{\centering \comp{NL-complete}\\ {\em (Theorem~\ref{thm:sys-general-e}) }}}\\
\cline{1-1}\cline{5-6}

$\mbox{A}^*$ &  & 
\multicolumn{2}{c||}{ } &  \multicolumn{2}{c|}{\parbox[c]{2cm}{\centering \comp{NP-complete}\\ {\em (Theorem~\ref{thm:sys-general-a})}}}\\
\cline{1-1}\cline{5-6}

$\mbox{EA}^*$ &  & 
\multicolumn{2}{c||}{ } & \multirow{2}{*}{\comp{PSPACE}} & \multirow{6}{*}{ \rotatebox[origin=c]{90}{\em 
(Theorem~\ref{thrm:system-general-EAk})} }\\
\cline{1-1}\cline{3-4}

$\mbox{E}^*\mbox{A}^*$ &  & 
\comp{$\mathsf{\Sigma}_{2}^p$} & \multirow{5}{*}{\rotatebox[origin=c]{90}{\em 
(Theorem~\ref{thrm:system-acyc-EAk1})} } & & \\
\cline{1-3}\cline{5-5}

$\mbox{AE}^*$ & {\parbox[c]{2cm}{\centering
\comp{P-complete} \\ {\em (Theorem~\ref{thrm:sys-tree-aeP})}}}
&  \multirow{2}{*}{\comp{$\mathsf{\Sigma}_{2}^p$-complete}} & 
 & \multirow{2}{*}{\comp{PSPACE-complete}} &  
\\
\cline{1-2}

$\mbox{A}^*\mbox{E}^*$ &  \multirow{4}{*}{{\parbox[c]{2cm}{\centering
\comp{NP-complete} \\ {\em (Corollary~\ref{cor:sys-tree-hltl})} }}} &
 &  
 & & \\
\cline{1-1}\cline{3-3}\cline{5-5}

$(\mbox{E}^*\mbox{A}^*)^k$ &   & 
\comp{$\mathsf{\Sigma}_{k}^p$-complete} & &
\multirow{2}{*}{\comp{$(k{-}1)$-EXPSPACE-complete}} & \\
\cline{1-1}\cline{3-3}

$(\mbox{A}^*\mbox{E}^*)^k$ &  & 
$\mathsf{\Sigma}_{k+1}^p$\comp{-complete} & & & \\
\cline{1-1}\cline{3-4}\cline{5-6}

$(\mbox{A}^*\mbox{E}^*)^*$ &  & 
\multicolumn{2}{c||}{\parbox[c]{3cm}{\centering \comp{PSPACE} \\ {\em 
(Corollary~\ref{cor:sys-acyclic-hltl}) }}} & 
\multicolumn{1}{c}{\parbox[c]{2cm}{\centering \comp{NONELEMENTARY}\\ {\em (Corollary~\ref{cor:sys-general-hltl})}}} & \\
\hline

\end{tabular}
}
\vspace{2mm}
\caption{Complexity of the HyperLTL repair problem in the size of 
the Kripke structure, where $k$ is the number of quantifier alternations in 
the formula.}
\label{tab:system}
\end{table*}

Table~\ref{tab:system} summarizes the contributions of this paper. It 
shows our results on the complexity of automated program repair with 
respect to different fragments of HyperLTL. The complexities are in the size 
of the Kripke structure. This \emph{system complexity} is the most 
relevant complexity in practice, because the system tends to be much larger 
than the specification. Our results show that the shape of the Kripke structure 
plays a crucial role in the complexity of the repair problem:

\begin{itemize}
\item {\bf Trees.} \ For trees, the complexity in the size of the Kripke 
structure does not go beyond \comp{NP}. The problem for the alternation-free 
fragment and the fragment with one quantifier alternation where the leading 
quantifier is existential is \comp{L-complete}. The problem for the fragment 
with one quantifier alternation where the leading quantifier is universal is 
\comp{P-complete}. The problem is \comp{NP-complete} for full HyperLTL. 

\item {\bf Acyclic graphs.} \ For acyclic Kripke structures, the complexity is 
\comp{NL-complete} for the alternation-free fragment and the fragment with one 
quantifier alternation where the leading quantifier is existential. The 
complexity is in the level of 
the polynomial hierarchy that corresponds to the number of quantifier 
alternations.

\item {\bf General graphs.} \ For general Kripke structures, the complexity is 
\comp{NL-complete} for the existential fragment and \comp{NP-complete} for the 
universal fragment. The complexity is \comp{PSPACE-complete} for the 
fragment with one quantifier alternation and $(k-1)$-\comp{EXPSPACE-complete} in 
the number $k$ of quantifier alternations.

\end{itemize}

We believe that the results of this paper provide the 
fundamental understanding 
of the repair problem for secure information flow and pave the way 
for further research on developing efficient and scalable techniques.

\paragraph*{Organization} The remainder of this paper is organized as follows.
In Section~\ref{sec:prelim}, we review Kripke structures and HyperLTL. We 
present a detailed motivating example in Section~\ref{sec:example}. The formal 
statement of our repair problem is in Section~\ref{sec:problem}.
Section~\ref{sec:tree} presents our results on the complexity of 
repair for HyperLTL in the size of tree-shaped Kripke structures. 
Sections~\ref{sec:acyclic} and~\ref{sec:general} present the 
results on the complexity of repair in acyclic and general graphs, 
respectively. We discuss related work in Section~\ref{sec:related}. We conclude 
with a discussion of future work in 
Section~\ref{sec:conclusion}. Detailed proofs are available in the full version of this paper.

\section{Preliminaries}
\label{sec:prelim}


\subsection{Kripke Structures}
\label{subsec:krip}

Let $\AP$ be a finite set of {\em atomic propositions} and $\alphabet = 
2^{\AP}$ be the {\em alphabet}. A {\em letter} is an element of $\alphabet$. A 
\emph{trace} $t\in \Sigma^\omega$ over alphabet $\alphabet$ is an infinite sequence of letters: $t = t(0)t(1)t(2) \ldots$


\begin{definition}
\label{def:kripke}
A {\em Kripke structure} is a tuple $\krip = \ktuple$,
where 

\begin{itemize}
 \item $\States$ is a finite set of {\em states};
 \item $\state_{\init} \in \States$ is the {\em initial state};
 \item $\trans \subseteq \States \times \States$ is a {\em transition 
relation}, and 
 \item $L: S \rightarrow \statespace$ is a {\em labeling function} on the 
states of 
$\krip$.
\end{itemize}
We require that 
for each $\state \in \States$, there exists $\state' \in \States$, such 
that $(\state, \state') \in \trans$.

\end{definition}

Figure~\ref{fig:kripke} shows an example Kripke structure where $L(s_{\init})= \{a\}, 
L(s_3)=\{b\}$, etc.
The \emph{size} of the Kripke structure is the number of 
its states. The directed graph $\kframe = \langle \States, \trans \rangle$ is 
called the {\em Kripke frame} of the Kripke structure $\krip$. A {\em loop} in 
$\kframe$ is a finite sequence $\state_0\state_1\cdots \state_n$, such that 
$(\state_i, \state_{i+1}) \in \trans$, for all $0 \leq i < n$, and $(\state_n, 
\state_0) \in \trans$. We call a Kripke frame {\em acyclic}, if the only loops 
are self-loops on otherwise terminal states, i.e., on states that have no other outgoing 
transition. See Fig.~\ref{fig:kripke} for an example. Since 
Definition~\ref{def:kripke} does not allow terminal states, we only consider 
acyclic Kripke structures with such added self-loops.

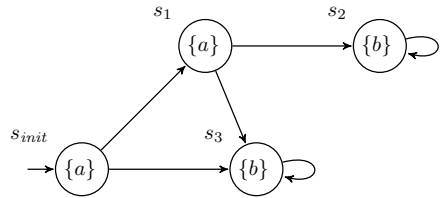
\begin{wrapfigure}{R}{6cm}
\centering
\scalebox{0.8}{
\begin{tikzpicture}[-,>=stealth',shorten >=.5pt,auto,node distance=2cm, 
semithick, initial text={}]

\node[initial, state] [text width=1em, text centered, minimum 
  height=2.5em](0) {\hspace*{-1.25mm}$\{a\}$};

\node [above left = 0.005 cm and 0.1 cm of 0](label){$s_{\init}$};


\node[state, above right=of 0][text width=1em, text centered, minimum 
  height=2.5em] (1) {\hspace*{-1.25mm}$\{a\}$};

\node [above left = 0.005 cm and 0.1 cm of 1](label){$s_{1}$};

\node[state, right=of 1][text width=1em, text centered, minimum 
height=2.5em] (2) {\hspace*{-1.25mm}$\{b\}$};

\node [above left = 0.005 cm and 0.1 cm of 2](label){$s_{2}$};

\node[state, right=of 0][text width=1em, text centered, minimum 
height=2.5em] (3) {\hspace*{-1.25mm}$\{b\}$};

\node [above left = 0.005 cm and 0.1 cm of 3](label){$s_{3}$};

\draw[->]   
  (0) edge node (01 label) {} (1)
  (1) edge node (12 label) {} (3)
  (0) edge node (03 label) {} (3)
  (1) edge node (12 label) {} (2)
  (2) edge [loop right] node (22 label) {} (2)
  (3) edge [loop right] node (33 label) {} (3);
    
\end{tikzpicture}
}
\caption {An acyclic Kripke structure.}
\label{fig:kripke}
\end{wrapfigure}

We call a Kripke frame \emph{tree-shaped}, or, in short, a \emph{tree}, if 
every state $\state$ has a unique state $\state'$ with $(\state', \state) \in 
\trans$, except for the root node, which has no predecessor, and the leaf 
nodes, 
which, again because of Definition~\ref{def:kripke}, additionally have a 
self-loop but no other outgoing transitions.

A \emph{path} of a Kripke structure is an infinite sequence of states
$\state(0)\state(1)\cdots \in \States^\omega$, such that:

\begin{itemize}
 \item $\state(0) = \state_\init$, and
\item $(\state(i), \state({i+1})) \in \trans$, for all $i \geq 0$. 
\end{itemize}
A trace of a Kripke structure is a trace $t(0)t(1)t(2) \cdots \in 
\alphabet^\omega$, such that there exists a path $\state(0)\state(1)\cdots \in 
\States^\omega$ with $t(i) = L(\state(i))$ for all $i\geq 0$. We denote by 
$\Trace(\krip, \state)$ the set of all traces of $\krip$ with paths that start 
in state $\state \in \States$.

In some cases, the system at hand is given as a tree-shaped or acyclic Kripke 
structure. Examples include session-based security protocols and space-efficient 
execution logs, because
trees allow us to organize the traces according to common prefixes and acyclic 
graphs according to both common prefixes and common suffixes.

\subsection{The Temporal Logic HyperLTL}
\label{subsec:hltl}

HyperLTL~\cite{cfkmrs14} is an extension of linear-time temporal logic (LTL) for hyperproperties.
The syntax of HyperLTL formulas is defined inductively by the following grammar:
\begin{equation*}
\begin{aligned}
& \varphi ::= \exists \pi . \varphi \mid \forall \pi. \varphi \mid \phi \\
& \phi ::= \tru \mid a_\pi \mid \lnot \phi \mid \phi \vee \phi \mid \phi \
\U \, \phi \mid \X \phi
    \end{aligned}
\end{equation*}
where $a \in \AP$ is an atomic proposition and $\pi$ is a trace variable from 
an infinite supply of variables $\V$. The Boolean connectives $\neg$ and 
$\vee$ have the usual meaning, $\U$ is the temporal \emph{until} operator and 
$\X$ is the temporal \emph{next} operator. We also consider the usual derived 
Boolean connectives, such as $\wedge$, $\Rightarrow$, and $\Leftrightarrow$, 
and the derived temporal operators \emph{eventually} $\F\varphi\equiv 
\tru\,\U\varphi$ and \emph{globally} $\G\varphi\equiv\neg\F\neg\varphi$.
The quantified formulas $\exists \pi$ and $\forall \pi$ are read as `along some 
trace $\pi$' and `along all traces $\pi$', respectively.  

The semantics of HyperLTL is 
defined with respect to a trace assignment, a partial mapping~$\Pi \colon \V 
\rightarrow \alphabet^\omega$. The assignment with empty domain is denoted by 
$\Pi_\emptyset$. Given a trace assignment~$\Pi$, a trace variable~$\pi$, and 
a 
concrete trace~$t \in \alphabet^\omega$, we denote by $\Pi[\pi \rightarrow 
t]$ 
the assignment that coincides with $\Pi$ everywhere but at $\pi$, which is 
mapped to trace $t$. Furthermore, $\suffix{\Pi}{j}$ denotes the assignment 
mapping each trace~$\pi$ in $\Pi$'s domain to 
$\Pi(\pi)(j)\Pi(\pi)(j+1)\Pi(\pi)(j+2) \cdots $.
The satisfaction of a HyperLTL formula $\varphi$ over a trace assignment 
$\Pi$ and a set of traces $T \subseteq \alphabet^\omega$, denoted by $T,\Pi 
\models \varphi$, is defined as follows:
\[
\begin{array}{l@{\hspace{1em}}c@{\hspace{1em}}l}
  T, \Pi \models a_\pi & \text{iff} & a \in \Pi(\pi)(0),\\
  T, \Pi \models \neg \psi & \text{iff} & T, \Pi \not\models \psi,\\
  T, \Pi \models \psi_1 \vee \psi_2 & \text{iff} & T, \Pi \models \psi_1 
\text{ 
or } T, \Pi \models \psi_2,\\
  T, \Pi \models \X \psi & \mbox{iff} & T,\suffix{\Pi}{1} \models \psi,\\
  T, \Pi \models \psi_1 \U \psi_2 & \text{iff} &  \exists i \ge 0: 
T,\suffix{\Pi}{\modified{i}} \models \psi_2 \ \wedge\ 
\forall j \in [0, i): T,\suffix{\Pi}{j} \models \psi_1,\\
  T, \Pi \models \exists \pi.\ \modified{\psi} & \text{iff} & \exists t \in T: T,\Pi[\pi 
\rightarrow t] \models \psi,\\
  T, \Pi \models \forall \pi.\ \modified{\psi} & \text{iff} & \forall t \in T: T,\Pi[\pi 
\rightarrow t] \models \psi.
  \end{array}
\]
We say that a set $T$ of traces satisfies a sentence~$\varphi$, denoted by $T 
\models \phi$, if $T, \Pi_\emptyset \models \varphi$. If the set $T$ is 
generated by a Kripke structure $\krip$, we write $\krip \models \varphi$.


\section{Motivating Example}
\label{sec:example}

A real-life example that demonstrates the importance of the problem under 
investigation in this paper is the information leak in the EDAS Conference 
Management 
System\footnote{\url{http://www.edas.info}}, first reported in~\cite{ab16}.
The system manages the review process for papers submitted to conferences. Throughout this process, authors can check on the status of their papers, but should not learn whether or not the paper has been accepted until official notifications are sent out.
The system is correctly programmed to show status ``Pending'' before
notification time and ``Accept'' or ``Reject'' afterwards. The leak (which 
has since then been fixed) occurred through another status display, which 
indicates whether or not the paper has been scheduled for presentation in a 
session of the conference. Since only accepted papers get scheduled to sessions, 
this allowed the authors to infer the status of their paper.

The problem is shown in Table~\ref{tab:html1}. The first two rows 
show the output in the web interface for the authors regarding two papers submitted to a conference after their 
notification, where the first paper is accepted while the second is rejected. 
The last two rows show two other papers 
where the status is pending. The internal decisions on notification ($\ntf$), 
acceptance ($\dec$), and session ($\ses$), shown in the table with a gray 
background, are not part of the observable output and are added for the 
reader's convenience. However, by comparing the rows for the two pending papers, 
the authors can observe that the Session column values are not the same. Thus, 
they can still deduce that the first paper is rejected and the second paper is 
accepted.

\begin{table}
\begin{minipage}{.48\textwidth}
\centering
\scalebox{0.9}{
\begin{tabular}{|c|c|c|c|c|c|}
\hline
   & \multicolumn{3}{c|}{\cellcolor{black!15} Internal Decisions} & \multicolumn{2}{c|}{\bf Output} \\
\cline{2-6}
   {\bf Paper}  & \cellcolor{black!15} $\ntf$ &
\cellcolor{black!15} $\dec$ & \cellcolor{black!15} 
$\ses$ & {\bf Status} & {\bf Session}\\
\hline
{\em foo1} & \cellcolor{black!15} $\tru$ & \cellcolor{black!15} $\tru$ & \cellcolor{black!15} $\tru$ & Accept & Yes \\
\hline
{\em bar1} & \cellcolor{black!15} \cellcolor{black!15} $\tru$ & \cellcolor{black!15} $\false$ & \cellcolor{black!15} $\false$ & Reject & No\\
\hline
{\em foo2} & \cellcolor{black!15} \cellcolor{black!15} $\false$ & \cellcolor{black!15} $\false$ & \cellcolor{black!15} $\false$ & Pending & No\\
\hline
{\em bar2} & \cellcolor{black!15} \cellcolor{black!15} $\false$ & \cellcolor{black!15} $\tru$ & \cellcolor{black!15} $\tru$ & Pending & Yes\\
\hline
\end{tabular}
}
\caption{Output with leak.}
\label{tab:html1}
\end{minipage}
\begin{minipage}{.48\textwidth}
\centering
\scalebox{0.9}{
\begin{tabular}{|c|c|c|c|c|c|}
  \hline
   & \multicolumn{3}{c|}{\cellcolor{black!15} Internal Decisions} & \multicolumn{2}{c|}{\bf Output} \\
\cline{2-6}
{\bf Paper} & \cellcolor{black!15} \cellcolor{black!15} $\ntf$ &
\cellcolor{black!15} $\dec$ & \cellcolor{black!15} 
$\ses$ & {\bf Status} & {\bf Session}\\
\hline
{\em foo1} & \cellcolor{black!15} $\tru$ & \cellcolor{black!15} $\tru$ & \cellcolor{black!15} $\tru$ & Accept & Yes \\
\hline
{\em bar1} & \cellcolor{black!15} \cellcolor{black!15} $\tru$ & \cellcolor{black!15} $\false$ & \cellcolor{black!15} $\false$ & Reject & No\\
\hline
{\em foo2} & \cellcolor{black!15} \cellcolor{black!15} $\false$ & \cellcolor{black!15} $\false$ & \cellcolor{black!15} $\false$ & Pending & No\\
\hline
{\em bar2} & \cellcolor{black!15} \cellcolor{black!15} $\false$ & \cellcolor{black!15} $\tru$ & \cellcolor{black!15} $\tru$ & Pending & No\\
\hline
\end{tabular}
}
\caption{Output without leak.}
\label{tab:html2}
\end{minipage}
\end{table}

%
\begin{figure}[t]
\centering
\vspace{-.2cm}
\begin{lstlisting}[style=CStyle,tabsize=2,language=ML,basicstyle=\scriptsize,escapechar=/]
void Output(){
  bool ntf = GetNotificationStatus();
  bool dec = GetDecision();
  bool ses = getSession();
  
  string status =
    if (ntf)
    then if (dec)
         then "Accept"
         else "Reject"
    else "Pending";
                                
  string session =
    if (?) /\label{line:if}/                 
    then "Yes" 
    else "No"
  
  Print(status, session);
}
\end{lstlisting}
\caption{Program sketch for a conference management system.}
\vspace{-.7cm}
\label{fig:code}
\end{figure}

The information leak in the EDAS system has previously been addressed by adding 
a monitor that detects such leaks~\cite{ab16,bf18}. Here, we instead eliminate 
the leak constructively. We use {\em program sketching}~\cite{s13} to 
automatically generate the code of our conference manager system. A program 
{\em sketch} expresses the high-level structure of an 
implementation, but leaves ``holes'' in place of the low-level details.
In our approach, the holes in a sketch are interpreted as nondeterministic choices. The repair eliminates nondeterministic choices in such a way that the specification becomes satisfied.

Figure~\ref{fig:code} shows a simple sketch for the EDAS example. The hole in 
the sketch (line~\ref{line:if}) is indicated by the question mark in the 
$\mathtt{if}$ statement. The replacement for the hole determines how the the 
value of the $\session$ output in the the web interface for the authors is 
computed. We wish to repair the sketch so that whenever two computations both 
result in $\status=\mbox{\tt "Pending"}$, the value of $\session$ is also the 
same. This requirement is expressed by the following HyperLTL formula:
\[
\begin{array}{r}
\varphi = \forall \pi. \forall \pi'. 
 \G \Big(\big((\status=\mbox{\tt "Pending"})_\pi \ \wedge\, 
(\status =\mbox{\tt "Pending"})_{\pi'}\big) \mbox{\qquad}\\ \Rightarrow\ 
 (\session_{\pi} \Leftrightarrow \session_{\pi'})\Big)
 \end{array}
 \]

In this example, an \emph{incorrect} repair would be to replace the  hole in 
line~\ref{line:if} with $\ses$, which would result in the output of 
Table~\ref{tab:html1}. A \emph{correct} repair would be to replace the  hole 
with the Boolean condition $\ntf \, \wedge \, \ses$, which would result in the 
output of Table~\ref{tab:html2}. 

In the rest of the paper, we formally define the repair problem 
and study its complexity for different fragments of HyperLTL.

\section{Problem Statement}
\label{sec:problem}

The {\em repair problem} is the following decision problem. Let $\krip = 
\ktuple$ be a Kripke structure and $\varphi$ be a closed HyperLTL formula. Does 
there exist a Kripke structure $\krip' = \ktupleprime$ such that:

\begin{itemize}
 \item $\States' = \States$,
 \item $s'_\init = s_\init$,
 \item $\trans' \subseteq \trans$,
 \item $L' = L$, and
 \item $\krip' \models \varphi$?
\end{itemize}
In other words, the goal of the repair problem is to identify a Kripke 
structure $\krip'$, whose set of traces is a subset of the traces of $\krip$ that
satisfies $\varphi$. Note that since the witness to the decision problem 
is a Kripke structure, following Definition~\ref{def:kripke}, it is 
implicitly implied that in $\krip'$, for every state $s \in \States'$, there 
exists a state $s'$ such that $(s, s') \in \trans'$. In other words, \modified{the
repair} does not create a {\em deadlock} state. 

We use the following notation to distinguish the different 
variations of the problem:
\begin{center}
 \PR{\sf Fragment}{\mbox{\sf Frame Type}},
\end{center}
where

\begin{itemize}

 \item \comp{PR} is the {\em program repair} decision problem as described 
above;
 
\item \comp{Fragment} is one of the following for $\varphi$:

\begin{itemize}

 \item We use regular expressions to denote the order and pattern of repetition 
of quantifiers. For example, \comp{E$^*$A$^*$-HyperLTL} denotes the fragment, 
where an arbitrary (possibly zero) number of existential quantifiers is 
followed by an arbitrary (possibly zero) number of universal quantifiers. Also, 
\comp{$\mbox{A}\mbox{E}^+$-HyperLTL} means a lead universal quantifier followed 
by one or more existential quantifiers.
%
\comp{E$^{\leq 1}$A$^*$-HyperLTL} denotes the fragment, where zero or one 
existential quantifier is followed by an arbitrary number of universal 
quantifiers.

\item \comp{(EA)$k$-HyperLTL}, for $k\geq 0$, denotes the fragment with $k$ 
alternations and a lead existential quantifier, where $k=0$ means an 
alternation-free formula with only existential quantifiers;

\item \comp{(AE)$k$-HyperLTL}, for $k\geq 0$, denotes the fragment with $k$ 
alternations and a lead universal quantifier, where $k=0$ means an 
alternation-free formula with only universal quantifiers,

\item \comp{HyperLTL} is the full logic HyperLTL, and

\end{itemize} 

\item \comp{Frame Type} is either \comp{tree}, \comp{acyclic}, or 
\comp{general}.

\end{itemize}

\section{Complexity of Repair for Tree-shaped Graphs}
\label{sec:tree}

In this section, we analyze the complexity of the program repair problem for trees.
This section is organized based on the rows in Table~\ref{tab:system}. We consider the following three HyperLTL fragments: (1) \comp{E$^*$A$^*$}, (2) \comp{$\mbox{A}\mbox{E}^*$}, and (3) the 
full logic.

\subsection{The \comp{E$^*$A$^*$} Fragment}

Our first result is that the repair problem for tree-shaped
Kripke structures can be solved in logarithmic time in the size of the Kripke 
structure for the fragment with only one quantifier alternation where the 
leading quantifier is existential. This fragment is the least expensive to deal 
with in tree-shaped Kripke structures and, interestingly, the 
complexity is the same as for the model checking problem~\cite{bf18}.

\begin{theorem}
  \label{thrm:sys-tree-ea}
\PR{E$^*$A$^*$-HyperLTL}{\mbox{tree}} is \comp{L-complete} in the size of the 
Kripke structure.
\end{theorem}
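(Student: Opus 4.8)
The plan is to prove membership in \comp{L} and \comp{L}-hardness separately, and the single structural fact driving both directions is that on a tree every trace is a root-to-leaf branch (made infinite by the terminal self-loop), and that any nonempty finite set of branches is \emph{exactly} the trace set of some deadlock-free substructure: keep precisely the edges lying on the chosen branches, and give every remaining state one arbitrary outgoing edge. Those remaining states stay unreachable from $s_\init$ (their only parent no longer points to them, by the unique-parent property of trees), so they contribute no traces, while the no-deadlock requirement of Definition~\ref{def:kripke} is met everywhere. Thus selecting a repair is the same as selecting a nonempty set of branches.

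First I would reduce the search to a \emph{bounded} one. Write $\varphi = \exists\pi_1\cdots\pi_m\,\forall\pi_{m+1}\cdots\pi_n.\,\psi$. I claim a repair exists iff there is a set $W$ of at most $m$ branches (a single branch when $m=0$) with $W\models\varphi$ under the standard semantics. For the forward direction, take any repair $\krip'$ with $T'=\Trace(\krip',s_\init)\models\varphi$, fix the existential witnesses $t_1,\dots,t_m\in T'$, and set $W=\{t_1,\dots,t_m\}$. Since every tuple drawn from $W\subseteq T'$ is already a tuple over $T'$, shrinking the universal domain from $T'$ to $W$ can only help, so $W\models\varphi$; the converse is immediate because $W$ is realizable by the paragraph above. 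The algorithm then deterministically enumerates every candidate $W$; there are $O(|\krip|^{\max(m,1)})$ of them, and a constant number of state-pointers suffices to iterate over them in logarithmic space, accepting iff some $W\models\varphi$.

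The hard part, and the technical heart of the upper bound, is checking $W\models\varphi$ in \emph{deterministic} logarithmic space for a fixed $\varphi$ and a constant-size $W$. Because both quantifier blocks range over the constantly many branches of $W$, it suffices to evaluate the fixed temporal body $\psi$ over a constant number of $n$-tuples of branches. Each such evaluation is LTL evaluation over the synchronous product of $n$ branches; since each branch is ultimately constant at its leaf, the product stabilizes after at most $|\krip|$ steps into an ultimately periodic word with a length-one loop. I would evaluate $\psi$ by recursive descent: its nesting depth is a constant, a $\U$ subformula is handled by scanning one position pointer forward (testing its right operand, and its left operand meanwhile) and, once the all-leaves letter is reached, reducing $\U$ to its right operand on that single constant letter. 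Each recursion level stores only a constant number of $n$-tuples of states, i.e.\ $O(\log|\krip|)$ bits, and constant depth keeps the total in $O(\log|\krip|)$; crucially, truth values are recomputed on demand rather than tabulated, which is exactly what keeps the procedure in \comp{L} rather than \comp{NL} or \comp{P}. Stepping toward a fixed target leaf, and the ancestor tests it requires, are done in log space by walking up from the leaf using unique parents. Getting this evaluation right is where the fixedness of $\varphi$ and the tree shape are essential, and it mirrors the model-checking argument of~\cite{bf18}.

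Finally, for \comp{L}-hardness I would reduce in logarithmic space from a standard \comp{L}-complete problem such as the order problem (is $a$ before $b$ on a given directed line), equivalently reachability in an out-degree-one graph. Given the instance, output a path-shaped tree (one branch) whose labels mark the two query elements with fixed propositions $p_a$ and $p_b$. Since every internal node of a path has a unique outgoing edge, deadlock-freeness forces $\trans'=\trans$, so $\krip$ itself is the only candidate and carries a single trace $t$. Taking the fixed formula $\varphi = \exists\pi.\,\F((p_a)_\pi \wedge \F\,(p_b)_\pi)$, which lies in the \comp{E$^*$A$^*$} fragment, a repair exists iff $t\models \F(p_a\wedge\F p_b)$ iff $a$ precedes $b$, which establishes the lower bound and completes the argument.
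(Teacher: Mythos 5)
Your proposal is correct and follows essentially the same route as the paper's proof: bound the candidate trace sets by the existential witnesses, iterate over them with logarithmic-size counters while restricting the universal quantifiers to the chosen traces, verify the body in logarithmic space as in~\cite{bf18}, and reduce from ORD for the lower bound. Your additional care in justifying that any nonempty set of branches is realizable as a deadlock-free repair, and that shrinking the universal domain to the existential witnesses preserves satisfaction, makes explicit two steps the paper leaves implicit but does not change the argument.
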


\begin{proof}
We note that the number of traces in a tree is bounded by the number of states, 
i.e., the size of the Kripke structure. The repair algorithm enumerates all 
possible assignments for the existential trace quantifiers, using, for each 
existential trace variable, a counter up to the number of traces, which 
requires only a logarithmic number of bits in size of the Kripke structure.
For each such assignment to the existential quantifiers, the algorithm steps 
through the assignments to the universal quantifiers, which again requires only 
a logarithmic number of bits in size of the Kripke structure. We consider only 
assignments with traces that have also been assigned to a existential 
quantifier.
For each assignment of the trace variables, we verify the formula, 
which can be done in logarithmic space~\cite{bf18}. If the verification is 
affirmative for all assignments to the universal variables, then
the repair consisting of the the traces assigned to the existential 
variables satisfies the formula.

In order to show completeness, we prove that the repair problem 
for the existential fragment is \comp{L-hard}.  
The \comp{L}-hardness for \PR{E$^*$-HyperLTL}{\mbox{tree}} and \linebreak 
\PR{A$^*$-HyperLTL}{\mbox{tree}} follows from the \comp{L-hardness} of 
ORD~\cite{et97}. ORD is the graph-reachability problem for directed 
line graphs. Graph reachability \linebreak from $s$ to $t$ can be checked with 
with the 
repair problems for $\exists \pi .\ \F (s_\pi \wedge \F t_\pi)$ or
$\forall \pi .\ \F (s_\pi \wedge \F t_\pi)$.\qed
\end{proof}

\subsection{The \comp{$\mbox{A}\mbox{E}^*$} Fragment}

We now consider formulas with one 
quantifier alternation where the leading quantifier is universal. The type of leading quantifier has a significant impact on the complexity of 
the repair problem: the complexity jumps from 
\comp{L-completeness} to \comp{P-completeness}, although the model checking 
complexity for this fragment remains \comp{L-complete}~\cite{bf18}.

\begin{theorem}
\label{thrm:sys-tree-aeP}

\PR{$\mbox{A}\mbox{E}^*$-HyperLTL}{\mbox{tree}} is \comp{P-complete} in the 
size of the Kripke structure.

\end{theorem}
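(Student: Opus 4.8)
The plan is to prove membership in \comp{P} and \comp{P}-hardness separately, both organized around the following observation: in a tree the set of traces coincides with the set of leaves and is therefore bounded by the number $N$ of states, and \emph{any} non-empty subset of these traces is realizable by a valid (deadlock-free) repair, since keeping exactly the root-to-leaf paths of a chosen set of leaves yields a substructure whose trace set is precisely that subset (unreachable states may retain an arbitrary original successor). Consequently, for $\varphi=\forall\pi\,\exists\pi_1\cdots\exists\pi_n.\,\psi$, a repair exists iff there is a non-empty set $T'\subseteq\Trace(\krip,s_\init)$ that is \emph{self-witnessing}, i.e.\ every $t\in T'$ has a witness tuple $(t_1,\dots,t_n)\in(T')^n$ with $\psi$ satisfied. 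Self-witnessing sets are closed under union, so they have a unique maximal element, the greatest fixpoint $\nu$ of the monotone operator $F(X)=\{t:\exists(t_1,\dots,t_n)\in X^n.\ \psi(t,t_1,\dots,t_n)\}$, and a repair exists iff $\nu\neq\emptyset$.

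For the upper bound I would compute $\nu$ by the standard removal procedure: start from $T'=\Trace(\krip,s_\init)$ and repeatedly delete any trace with no surviving witness tuple, terminating after at most $N$ rounds. The per-trace test enumerates the at most $N^n$ candidate tuples (polynomial for the fixed $n$ of the formula) and checks the quantifier-free $\psi$ on each fixed assignment, which is feasible in logarithmic space by the model-checking result used in Theorem~\ref{thrm:sys-tree-ea}. The whole procedure is polynomial in $N$, giving membership in \comp{P}.

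For \comp{P}-hardness I would reduce from the monotone circuit value problem, which is \comp{P}-complete. The self-witnessing semantics is exactly an And--Or evaluation under greatest-fixpoint semantics: an Or-node is supported by one surviving successor (a singleton witness, padded to $n$ entries), an And-node by all of its successors. The difficulty is that the repair question is \emph{non-emptiness} of $\nu$, not membership of a designated node; a naive encoding fails because a $\tru$ input becomes an And-node with no successor, which is vacuously self-supporting and forces $\nu\neq\emptyset$ regardless of the circuit's value. The key idea is to route all support through the output gate $O$: every gate node is additionally conjoined with a dependency on $O$ (implemented by adding $O$ to the successor set of And-nodes, and by one auxiliary Or-node per Or-gate), while $O$ keeps only its original inputs. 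Then any non-empty self-witnessing set must contain $O$, and its restriction to the circuit nodes is a post-fixpoint of the original \emph{acyclic} circuit operator, hence contained in that operator's unique fixpoint, namely the set of gates evaluating to $\tru$. Thus $\nu\neq\emptyset$ iff $O$ evaluates to $\tru$; conversely, when $O=\tru$ the set of $\tru$-gates is self-witnessing. The only cycles introduced run through $O$, and this post-fixpoint-containment argument is precisely what prevents them from producing spurious survivors.

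It remains to realize this And--Or graph as a tree with a fixed $\mbox{A}\mbox{E}^*$ formula. I would give each node a distinct identifier, build a tree with one leaf per node whose root-to-leaf labeling records the node's type and the identifiers of its (constantly many) successors, and take $\psi$ to be a fixed formula that, reading the type of the universally quantified trace $\pi$, asserts the existence of existential witnesses whose own identifiers match the successor identifiers stored along $\pi$ (their membership in $T'$ being automatic from the existential quantifier); identifier equality is checked bitwise by a fixed $\G$-comparison over aligned positions. Since the fan-in is bounded, a constant number of existential quantifiers suffices, the formula has constant size, and the construction is computable in logarithmic space. The main obstacle throughout is the non-emptiness-versus-designated-node mismatch, resolved by the global-dependency-on-$O$ gadget together with the post-fixpoint containment argument.
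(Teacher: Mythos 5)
Your proof is correct, and while the membership argument coincides with the paper's, your hardness argument takes a genuinely different route. For the upper bound you compute the greatest fixpoint of the ``has a surviving witness tuple'' operator by iterated removal of unsupported traces; this is exactly the paper's marking algorithm (mark all leaves, repeatedly unmark leaves whose universal instantiation has no marked existential witnesses), and your preliminary observations --- that in a tree every non-empty subset of leaves is realizable by a deadlock-free repair and that self-witnessing sets are closed under union --- are the right justification for why that procedure decides the repair problem. For the lower bound the paper reduces from Horn satisfiability: clauses are normalized to two negative and one positive literal, each clause becomes a branch carrying bit-sequence identifiers ($\negt_1$, $\negt_2$, $\pos$, plus a clause counter over $c$ and $h$), and the constant-size formula couples the surviving branches to a satisfying assignment via a clause-preservation constraint. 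You instead reduce from the monotone circuit value problem, which lets you exploit the greatest-fixpoint semantics directly: the And--Or support relation is precisely the self-witnessing condition, and the mismatch between ``$\nu\neq\emptyset$'' and ``the output gate is true'' is resolved by your rerouting gadget (every gate additionally depends on $O$), with the post-fixpoint containment in the acyclic circuit operator correctly ruling out spurious survivors on the newly introduced cycles through $O$. Both reductions use the same logarithmic-depth identifier encoding and a fixed $\mbox{A}\mbox{E}^*$ formula with bitwise $\G$-comparisons. Your version arguably mirrors the upper-bound algorithm more transparently; the paper's Horn-SAT gadget has the advantage that its clause-counter machinery is reused verbatim in the subsequent \comp{NP}-hardness proof for the full logic (Theorem~\ref{thrm:sys-tree-aae-lower}).
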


\noindent {\em Proof sketch.} \ 
Membership to \comp{P} can be shown by the following algorithm. For $\varphi 
= \forall \pi_1. \exists \pi_2 .\ \psi$, we begin by 
marking all the leaves. Then, in several rounds, we go through all marked 
leaves $v_1$ and instantiate $\pi_1$ with the trace leading to $v_1$. We then 
again go through all marked leaves $v_2$ and instantiate $\pi_2$ with the trace 
leading to $v_2$, and check $\psi$ on the pair of traces. If the check is 
successful for some instantiation of $\pi_2$, we leave $v_1$ marked, 
otherwise we remove the mark. When no more marks can be removed, we 
eliminate all branches of the tree that are not marked. For additional 
existential quantifiers, the number of rounds will increase linearly.

For the lower bound, we reduce the {\em Horn satisfiability} problem, which is 
\comp{P-hard}, to the repair problem for AE$^*$ formulas. We first 
transform the given Horn formula to one that every clause consists 
of two negative and one positive literals. We map this Horn formula to a 
tree-shaped Kripke structure and a constant-size HyperLTL formula. For example, 
formula $(\neg x_1 \vee \neg x_2 \vee f) \, \wedge \, (\neg x_3 \vee \neg f \vee 
x_4) \, \wedge \, (\neg x_2 \vee \neg 
x_2 \vee x_4) \, \wedge \, (\neg x_1 \vee \neg x_1   \vee \bot )$ is mapped to 
the Kripke structure in Fig.~\ref{fig:system-tree-ae}.

The Kripke structure includes one branch for each clause of 
the given Horn formula, where the length of each branch is in logarithmic order 
of the number of variables in the Horn formula. We use atomic propositions 
$\negt_1$ and $\negt_2$ to indicate negative literals and $\pos$ for the 
positive literal. We also include propositions $c$ and $h$ to mark 
each clause with a bitsequence. That is, for each clause $\{\neg x_{n_1} \vee 
\neg x_{n_2} \vee x_{p}\}$, we 
\begin{figure}[t]
\centering
\scalebox{.65}{
 \begin{tikzpicture}
 \tikzset{
    >=stealth,
    auto,
    possible world/.style={circle,draw,thick,align=center},
    real world/.style={double,circle,draw,thick,align=center},
    minimum size=25pt
}

\coordinate (init) at (0, 0);

\node[draw,circle,text width=0.5cm,fill=white] (initstate) at ($ (init) + (3, 
1) $) {};

\draw[->] ($ (initstate) + (-.8,.5) $ ) -- (initstate);


\foreach \i in {0,1,2,3}{
\foreach \j in {0,1,2,3}{
\draw [fill=white, align=center] ($ (init) + (-2.6, -.5) + ( \i*3.8,{(\j*-1.7) 
-1} ) $) ellipse (1.3cm and .45cm) node (v\i\j)
{\ifthenelse{\i=1 \AND \j=1}{$\{\negt_1\}$}{
\ifthenelse{\i=0 \AND \j=1}{$\{\negt_2\}$}{
\ifthenelse{\i=3 \AND \j=2}{}{
\ifthenelse{\i=1 \AND \j=2}{$\{\negt_2,\pos\}$}{
\ifthenelse{\(\i=0 \AND \j=2\) \OR \(\i=2 \AND \j=2\)}{$\{\pos\}$}{
\ifthenelse{\i=3 \AND \j=0}{$\{\negt_1,\negt_2\}$}{
\ifthenelse{\i=3 \AND \j=3}{}{
\ifthenelse{\i=2 \AND \j=1}{$\{\negt_1, \negt_2,h\}$}{}
\ifthenelse{\i=1 \AND \j=0}{$\{\negt_1, \negt_2, c\}$}{
\ifthenelse{\i=3 \AND \j=1}{$\{h\}$}{
\ifthenelse{\i=0 \AND \j=0}{$\{\negt_1,\pos\}$}{}
}
}}}}}}}
}};

}
\draw [->] (initstate) -- (v\i0.north);
}


\foreach \i in {0,1,2,3}{
\foreach \j/\n in {0/1,1/2,2/3}{
\draw [->] (v\i\j) -- (v\i\n);

}}
 

\foreach \i in {0,1,2,3}{
\foreach \j/\n in {0,1,2,3}{

\pgfmathsetmacro\ii{int(\i+1)};
\node (b) at (v\i\j)[xshift=-6mm,yshift=-7mm] 
{$b_{\ii_\j}$};

}}
 

\foreach \i in {0,1,2,3} {
\path (v\i3) edge [loop below] (v\i3);
}

\end{tikzpicture}
 }
 \caption{The Kripke structure of the Horn formula.}
 \label{fig:system-tree-ae}
\end{figure}
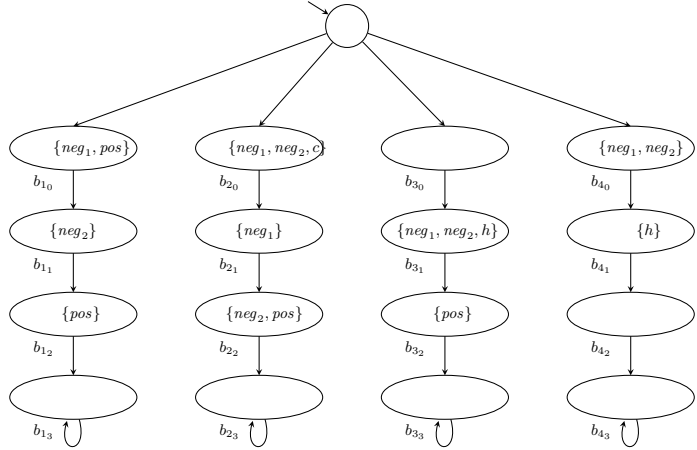
label states of its branch by atomic proposition $\negt_1$ according to the 
bitsequence of $x_{n_1}$, atomic proposition $\negt_2$ according to the 
bitsequence of $x_{n_2}$, and atomic proposition $\pos$ according to the 
bitsequence of $x_{p}$. We reserve values $0$ and $|X|-1$ for $\bot$ and 
$\bot$, respectively, where $X$ is the set of variables of the Horn formula. 
Finally, we use the atomic proposition $c$ to assign to each clause a number 
(represented as the bitsequence of valuations of $c$, starting with 
the lowest-valued bit; the position after the highest-level bit is marked by 
the occurrence of atomic proposition $h$, which does not appear anywhere else).

The HyperLTL formula enforces that (1) $\top$ is assigned to true, (2) $\bot$ 
is assigned to false, (3) all clauses are satisfied, and (4) if a positive 
literal $l$ appears on some clause in the repaired Kripke structure, then all 
clauses with $l$ must be preserved by the repair.\qed

\subsection{The Full Logic}

We now turn to full HyperLTL. We first show that the repair problem is in \comp{NP}. 

\begin{theorem}
\label{thrm:sys-tree-aae-upper}
\PR{HyperLTL}{\mbox{tree}} is in \comp{NP} in the size of the 
Kripke structure.
\end{theorem}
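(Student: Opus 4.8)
The plan is to reduce the repair problem on a tree to guessing a subset of the tree's leaves and then verifying HyperLTL satisfaction over the induced trace set in polynomial time. The starting observation, as already noted in the proof of Theorem~\ref{thrm:sys-tree-ea}, is that a tree-shaped Kripke structure $\krip$ of size $n$ has at most $n$ traces: since every non-root state has a unique predecessor, each trace is the unique root-to-leaf path (followed by the leaf self-loop), and there are at most as many such paths as there are leaves. Moreover, any repair $\krip'$ is, as far as its reachable trace set is concerned, completely determined by the set of leaves it keeps reachable: once a set $\mathcal{L}'$ of leaves is fixed, the reachable part of $\krip'$ is forced to be the union of the unique root-to-leaf paths ending in $\mathcal{L}'$, and $\Trace(\krip', s_\init)$ is exactly the set of those paths. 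States that become unreachable can retain their original outgoing edges, so the no-deadlock requirement is met without affecting the trace set.

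First I would describe the nondeterministic algorithm. It guesses a nonempty subset $\mathcal{L}'$ of the leaves of $\krip$, encoded as a bit string of length at most $n$; this is a certificate of size linear in $n$. From $\mathcal{L}'$ it constructs, in polynomial time, the corresponding trace set $T'$, namely the set of ultimately periodic traces given by the root-to-leaf paths ending in the chosen leaves. It then verifies $T' \models \varphi$ and accepts iff the check succeeds. By the correspondence above, an accepting guess exists iff a valid repair exists.

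Next I would argue that the verification step runs in time polynomial in $n$. The crucial point is that $\varphi$, and in particular its number $k$ of quantifiers, is fixed, so only the dependence on $n$ matters. Evaluating $T' \models \varphi$ follows the HyperLTL semantics directly: each quantifier ranges over the at most $n$ traces in $T'$, so the quantifier prefix generates at most $n^k$ assignments of concrete traces to the $k$ trace variables, a polynomial number since $k$ is constant. For each such assignment we check the quantifier-free body $\phi$ over the corresponding $k$-tuple of traces. Since each trace is ultimately periodic with a lasso of length at most $n$ (a path to a leaf, then a self-loop), the combined $k$-tuple is again ultimately periodic of polynomial length, and standard LTL evaluation (dynamic programming over the fixed set of subformulas of $\phi$) decides it in time polynomial in $n$. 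Combining the polynomially many assignments with the polynomial per-assignment cost, and evaluating the quantifier alternation from the innermost quantifier outward, yields a polynomial-time verifier.

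The main obstacle to be careful about is exactly this verification bound: one must confirm that the \emph{full} alternating quantifier structure can be evaluated in polynomial time once $T'$ is fixed, not merely the alternation-free case. This works because fixing the finite trace set $T'$ collapses each quantifier into a finite, polynomially bounded disjunction or conjunction, so the recursive evaluation of the quantifier prefix over $T'$ never leaves polynomial time for a constant number of quantifiers, and the remaining temporal reasoning reduces to finitely many LTL checks on ultimately periodic words. Together, the polynomial-size certificate and the polynomial-time verifier place \PR{HyperLTL}{\mbox{tree}} in \comp{NP}.
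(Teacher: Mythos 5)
Your proposal is correct and follows essentially the same approach as the paper: nondeterministically guess the repair (the paper guesses the substructure $\krip'$ directly, you guess the equivalent certificate of retained leaves) and then verify satisfaction in polynomial time. The only difference is that the paper discharges the verification step by citing the logarithmic-space model checking result for trees from the prior work of Bonakdarpour and Finkbeiner, whereas you give a self-contained polynomial-time verification by enumerating the $n^k$ quantifier assignments over the at most $n$ ultimately periodic traces; both suffice for membership in \comp{NP}.
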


\begin{proof}
We nondeterministically 
guess a solution $\krip'$ to the repair problem. Since determining 
whether or not $\krip' \models \varphi$ can be solved in logarithmic 
space~\cite{bf18}, the repair problem is in \comp{NP}.\qed
\end{proof}

For the lower bound, the intuition is that an additional leading 
universal quantifier allows us to encode full Boolean satisfiability,
instead of just Horn satisfiability as in the previous section. Interestingly, 
the model checking problem remains \comp{L-complete} for this fragment~\cite{bf18}.

\begin{theorem}
\label{thrm:sys-tree-aae-lower}
\PR{AAE-HyperLTL}{\mbox{tree}} is \comp{NP-hard} in the size of the 
Kripke structure.
\end{theorem}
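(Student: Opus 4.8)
The plan is to reduce \comp{3-SAT}, which is \comp{NP}-complete, to \PR{AAE-HyperLTL}{\mbox{tree}}, reusing the overall shape of the Horn construction from the proof of Theorem~\ref{thrm:sys-tree-aeP} but with general (non-Horn) clauses. Given a 3-CNF formula $\Phi=\bigwedge_{j=1}^{m}C_j$ over variables $x_1,\dots,x_n$ with $C_j=(\ell_{j,1}\vee\ell_{j,2}\vee\ell_{j,3})$, I build a tree-shaped Kripke structure $\krip$ whose root has one \emph{clause gadget} per clause. The gadget for $C_j$ is a chain that first spells out the clause index $j$ in the labeling and then branches into the three literal occurrences; each occurrence is a short chain that spells out the variable index $\mathrm{var}(\ell_{j,k})$ together with one bit for its polarity, ending in a self-looping leaf. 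The total size is $O(m(\log m+\log n))$, so the reduction is polynomial (indeed logarithmic-space). As in the Horn construction, a repair can only prune root-to-leaf branches: every chain node has a unique successor that must be retained to avoid a deadlock, so the only freedom is, at each branching node, \emph{which} children to keep. Concretely, a repair $\krip'$ selects a nonempty set $R\subseteq\{1,\dots,m\}$ of clauses whose gadget stays reachable (the root must keep at least one child) and, inside each reachable gadget, at least one literal occurrence (the ternary branching node must keep at least one child). Reading a trace of $\krip'$ thus reveals a clause index together with a selected literal of that clause.

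The specification is the \comp{AAE} formula $\varphi=\forall\pi.\forall\pi'.\exists\pi''.\,\psi$, where $\psi=\mathit{Cons}(\pi,\pi')\wedge\mathit{Next}(\pi,\pi'')$. Here $\mathit{Cons}(\pi,\pi')$ uses $\X$ and the propositions of the variable/polarity region to assert that $\pi$ and $\pi'$ do \emph{not} encode the same variable with opposite polarities, and $\mathit{Next}(\pi,\pi'')$ asserts that the clause index read off $\pi''$ is the successor, modulo $m$, of the clause index read off $\pi$. Since $\mathit{Cons}$ does not mention $\pi''$, the formula factors as $\big(\forall\pi.\forall\pi'.\,\mathit{Cons}(\pi,\pi')\big)\wedge\big(\forall\pi.\exists\pi''.\,\mathit{Next}(\pi,\pi'')\big)$. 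The first conjunct states that the set of selected literals is \emph{consistent}: no variable is selected both positively and negatively. The second conjunct is a cyclic \emph{coverage-propagation}: whenever clause $j$ contributes a trace, some trace of clause $(j{+}1)\bmod m$ must also be present. Because a repair is never empty (the root keeps a child), at least one clause is reachable, and the propagation then forces every clause around the cycle to be reachable, hence covered by at least one selected literal.

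Correctness follows in both directions. If $\Phi$ is satisfied by an assignment $\alpha$, the repair keeps all $m$ gadgets and, in each, selects one literal made true by $\alpha$; then every clause is covered so $\mathit{Next}$ holds throughout, and since every selected literal is true under $\alpha$ no variable is selected with both polarities, so $\mathit{Cons}$ holds. Conversely, any repair satisfying $\varphi$ covers every clause (by the propagation argument) with a consistent set of selected literals; defining $\alpha(x_i)$ to be the polarity with which $x_i$ is selected (arbitrary if $x_i$ is never selected) is well defined by $\mathit{Cons}$ and makes the selected literal of each clause true, so $\Phi$ is satisfiable.

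The main obstacle, and the reason two universals \emph{and} one existential are needed, is that in a tree the repair is free to disconnect any subtree: the no-deadlock requirement only forces each state to keep one successor, so a naive encoding would let the repair simply drop every clause it cannot satisfy and pass vacuously. A purely universal formula cannot detect a missing clause, since there is then no trace to quantify over; the existential quantifier, used in the cyclic coverage-propagation, is exactly what forces all clauses to remain present. The two universals are in turn indispensable for the consistency check, which must compare two independently chosen selected literals. Everything else—encoding the indices in the labeling and expressing $\mathit{Cons}$ and $\mathit{Next}$ as fixed-size temporal formulas (using unary clause indices if one wishes to avoid binary successor arithmetic)—is routine.
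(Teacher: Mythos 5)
Your proposal is correct and follows essentially the same route as the paper: a reduction from 3SAT to a tree with one branch per clause and one sub-branch per literal occurrence, two universal quantifiers enforcing that no variable is selected with both polarities, and an existential quantifier in a cyclic clause-index successor constraint that forces every clause gadget to survive the repair. The only cosmetic difference is that you encode variable indices explicitly as bitstrings with a polarity bit, whereas the paper encodes variables positionally along each sub-branch so that consistency reduces to $\G(\neg\mathit{pos}_{\pi_1}\vee\neg\mathit{neg}_{\pi_2})$; both yield a constant-size formula and a polynomial-size structure.
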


\noindent {\em Proof sketch.} \
We map an instance of the 3SAT problem to a Kripke structure and a HyperLTL 
formula. Figure~\ref{fig:sys-tree-aae} shows an example, where each clause in 
3SAT is mapped to a distinct branch and each literal in the clause 
is mapped to a distinct sub-branch. We label positive and negative 
literals by $\pos$ and $\negt$, respectively. Also, propositions $c$ and $h$ 
are used to mark the clauses with bitsequences in the same fashion as 
in the construction of proof of Theorem~\ref{thrm:sys-tree-aeP}. The HyperLTL 
formula $\varphi_{\mathsf{map}}$ ensures that (1) 
at least one literal in each clause is true, (2) a literal is not assigned to 
two values, and (3) all clauses are preserved during repair:
\[
\begin{array}{l}
\varphi_{\mathsf{map}} = \forall \pi_1.\forall \pi_2.\exists \pi_3.  \Bigg[\G 
\big(\neg \pos_{\pi_1} \, \vee \, \neg \negt_{\pi_2}\big)\Bigg] \; \wedge \\
{\X} \Bigg[\Big(\big(c_{\pi_2} {\wedge} \neg c_{{\pi_3}}\big) \, \U \, \big(\neg 
c_{\pi_2} \wedge c_{\pi_3} \wedge \X ((c_{\pi_2} \leftrightarrow c_{\pi_3})\, 
\mathcal 
U\, h_{{\pi_2}})\big)\Big) \, \vee\,  (c_{\pi_2} {\wedge} \neg c_{\pi_3}) \, \U \, 
h_{{\pi_2}}\Bigg]
\end{array}
\]
The answer to the 3SAT problem is affirmative 
if and only if a repair exists for the mapped Kripke structure with respect 
to formula $\varphi_{\mathsf{map}}$.\qed

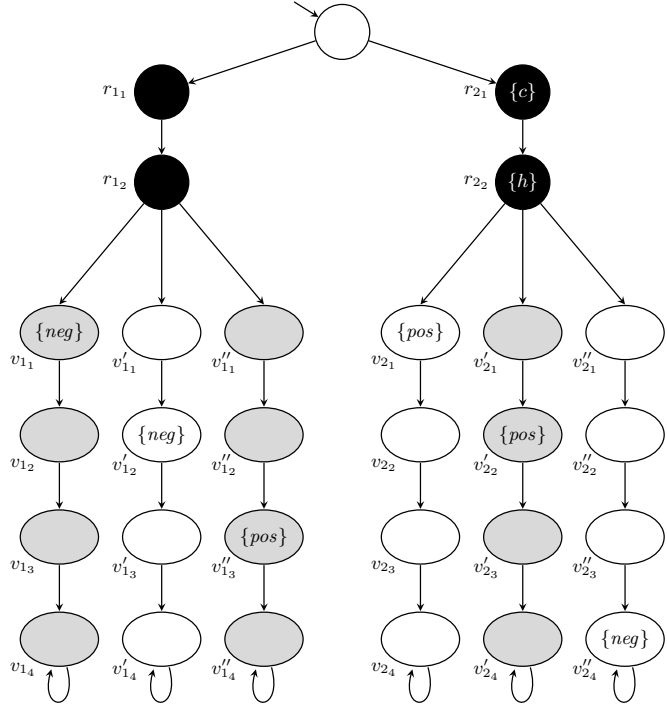
\begin{figure}[t]
\centering
\scalebox{.8}{
\begin{tikzpicture}
 
 \tikzset{
    >=stealth,
    possible world/.style={circle,draw,thick,align=center},
    real world/.style={double,circle,draw,thick,align=center},
    minimum size=26pt
}

\coordinate (init) at (0, 0);

\node[draw,circle,text width=0.5cm,fill=white] (initstate) at ($ (init) + (3, 
1) $) {};

\draw[->] ($ (initstate) + (-.8,.5) $ ) -- (initstate);


\foreach \i in {0, 1} {
\foreach \j  in {0, 1} {
\node[draw,circle, minimum size=9mm ,fill=black] (r\i\j) at ($ (init) + 
(\i*6,\j*-1.5) $) [text=white]{\ifthenelse{\i=1 \AND 
\j=0}{$\{c\}$}{\ifthenelse{\i=1 \AND \j=1}{$\{h\}$}{}}};

\pgfmathsetmacro\ii{int(\i+1)};
\pgfmathsetmacro\jj{int(\j+1)};
\node (a) at (r\i\j.west)[xshift=-3mm] {$r_{\ii_\jj}$};
}}


\foreach \i in {0, 1} {
\foreach \j/\n in {0/1} {
\draw [->] (r\i\j) -- (r\i\n);
}}


\draw[->] (initstate) -- (r00);
\draw[->] (initstate) -- (r10);


\foreach \i in {0,1}{
\foreach \k in {0, 1, 2}{
\foreach \j in {1, 2, 3, 4}{
\ifthenelse{\(\i=0 \AND \k=1\) \OR \(\i=1 \AND \k=0\) \OR \(\i=1 \AND \k=2\)} 
{\def\clr{white}} {\def\clr{black!15}};
\draw [fill=\clr] ($ (init) + 
(r01)+ (0, -.8) + ( {(\i*6)+(\k*1.7-1.7)},\j*-1.7) $) ellipse (.65cm and .45cm) 
node 
(v\i\k\j)
{\ifthenelse{\(\i=0 \AND \j=3 \AND \k=2\) \OR \(\i=1 \AND \j=1 \AND \k=0\) \OR 
\(\i=1 \AND \j=2 \AND \k=1\)}{$\{\pos\}$}
{\ifthenelse{\(\i=0 \AND \j=1 \AND \k=0\) \OR \(\i=0 \AND \j=2 \AND \k=1\) \OR 
\(\i=1 \AND \j=4 \AND \k=2\) }{$\{\negt\}$}{}}{}};

\pgfmathsetmacro\ii{int(\i+1)};
\node (b) at (v\i\k\j)[xshift=-6mm,yshift=-5mm] 
{\ifthenelse{\k=0}{$v_{\ii_\j}$}{\ifthenelse{\k=1}{$v'_{\ii_\j}$}{$v''_{\ii_\j}$
} }};

}
\draw [->] (r\i1) -- (v\i\k1.north);
}

}


\foreach \i in {0,1}{
\foreach \k in {0, 1, 2}{
\foreach \j/\n in {1/2,2/3,3/4}
{
\draw [->] (v\i\k\j) -- (v\i\k\n);
}}}


\foreach \i in {0,1} {
\foreach \k in {0,1,2} {
\path (v\i\k4) edge [loop below] (v\i\k4);
}}

\end{tikzpicture}
}
\caption{The Kripke structure for the 3SAT formula $(\neg x_1 \vee \neg x_2 
\vee x_3) \ \wedge \ (x_1 \vee x_2 \vee \neg x_4)$. The truth assignment $x_1 = 
\tru$, $x_2 = \false$, $x_3 = \false$, $x_4 = \false$ renders the tree with 
white branches, i.e., the grey branches are removed during repair.}
\label{fig:sys-tree-aae}
\vspace{-.5cm}
\end{figure}

\begin{corollary}
  \label{cor:sys-tree-hltl}
The following are \comp{NP-complete} in the size of the 
Kripke structure: \PR{A$^*$E$^*$HyperLTL}{\mbox{tree}}, 
\PR{$\mbox{(EA)}^k\mbox{-HyperLTL}$}{\mbox{tree}}, 
\PR{$\mbox{(AE)}^k\mbox{-HyperLTL}$}{\mbox{tree}}, and
\PR{HyperLTL}{\mbox{tree}}.
\end{corollary}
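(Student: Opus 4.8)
The plan is to obtain both the upper and the lower bound almost for free from the two results just established for the full logic, exploiting the fact that every fragment named in the corollary sits syntactically between the hard instance of Theorem~\ref{thrm:sys-tree-aae-lower} and full HyperLTL.

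For the \comp{NP} upper bound I would simply observe that \comp{A$^*$E$^*$}, \PR{$\mbox{(EA)}^k\mbox{-HyperLTL}$}{\mbox{tree}}, and \PR{$\mbox{(AE)}^k\mbox{-HyperLTL}$}{\mbox{tree}} are all syntactic sublogics of HyperLTL. Theorem~\ref{thrm:sys-tree-aae-upper} already puts \PR{HyperLTL}{\mbox{tree}} in \comp{NP} by guessing a repair $\krip'$ and verifying $\krip'\models\varphi$ in logarithmic space, so the very same nondeterministic algorithm witnesses membership in \comp{NP} for each restricted fragment, with no modification.

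For \comp{NP}-hardness the starting point is the reduction of Theorem~\ref{thrm:sys-tree-aae-lower}, which maps a 3SAT instance to a tree-shaped Kripke structure together with $\varphi_{\mathsf{map}} = \forall\pi_1.\forall\pi_2.\exists\pi_3.\,\psi$. Since the prefix $\forall\forall\exists$ already lies in \comp{A$^*$E$^*$}, hardness of \PR{A$^*$E$^*$-HyperLTL}{\mbox{tree}} and of \PR{HyperLTL}{\mbox{tree}} is immediate. The same prefix has a single alternation with a leading universal, so it belongs to the $\mbox{(AE)}^k$ fragment for every $k\ge 1$, which settles that case. The one fragment needing an adjustment is $\mbox{(EA)}^k$, whose formulas must begin with an existential quantifier: here I would prepend a vacuous existential and use $\exists\pi_0.\forall\pi_1.\forall\pi_2.\exists\pi_3.\,\psi$, where $\pi_0$ does not occur in $\psi$. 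Because every Kripke structure, and every legal repair, has at least one trace (the problem statement forbids deadlocks), this leading quantifier is semantically inert: a repair satisfies the rewritten formula exactly when it satisfies $\varphi_{\mathsf{map}}$. The reduction is thus preserved verbatim, and the new prefix $\exists\forall\forall\exists$ has two alternations with a leading existential, placing it in $\mbox{(EA)}^k$ for every $k\ge 2$.

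The point I would take care to state explicitly is the boundary with the cheaper fragments already classified, which is the only real subtlety. For $k\le 1$ the fragment $\mbox{(EA)}^k$ coincides with \comp{E$^*$} or \comp{E$^*$A$^*$}, and for $k=0$ the fragment $\mbox{(AE)}^k$ coincides with \comp{A$^*$}, all of which are merely \comp{L-complete} by Theorem~\ref{thrm:sys-tree-ea}. Hence the \comp{NP-completeness} claim is understood for the alternation counts at which each fragment first becomes expressive enough to embed the $\forall\forall\exists$ instance, namely $k\ge 1$ for $\mbox{(AE)}^k$ and $k\ge 2$ for $\mbox{(EA)}^k$. I would also dispatch the routine bookkeeping that introducing the unused quantifier leaves the Kripke structure untouched, so the reduction remains polynomial and the hardness is genuinely in the size of the structure.
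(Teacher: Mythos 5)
Your proposal is correct and matches the paper's intent exactly: the corollary is derived directly from Theorem~\ref{thrm:sys-tree-aae-upper} (membership in \comp{NP} for full HyperLTL, hence for every sublogic) and Theorem~\ref{thrm:sys-tree-aae-lower} (the $\forall\forall\exists$ hardness instance, which lies in each listed fragment). Your extra care with the vacuous leading existential for $\mbox{(EA)}^k$ and the explicit remark that the claim only bites for $k\ge 2$ (resp.\ $k\ge 1$ for $\mbox{(AE)}^k$) is bookkeeping the paper leaves implicit, but it is the right bookkeeping.
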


\section{Complexity of Repair for Acyclic Graphs}
\label{sec:acyclic}

We now turn to acyclic graphs. Acyclic Kripke structures
are of practical interest, because
certain security protocols, in particular authentication algorithms,
often consist of sequences of phases with no
repetitions or loops.
We develop our results first for the alternation-free fragment, then for formulas with quantifier alternation.

\subsection{The Alternation-free Fragment}

We start with the existential fragment. The complexity of the repair problem 
for this fragment is interestingly the same as the model checking problem.

\begin{theorem}
\label{thm:sys-acyc-e}
\PR{$\mbox{E}^*$-HyperLTL}{\mbox{acyclic}} is \comp{NL}-complete in the size of 
the 
Kripke structure.
\end{theorem}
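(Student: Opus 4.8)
The plan is to reduce the repair problem to the corresponding model checking problem. The central observation is that for a purely existential formula $\varphi = \exists \pi_1 \cdots \exists \pi_n.\ \psi$ with quantifier-free $\psi$, a repair exists if and only if $\krip \models \varphi$. The forward direction is immediate from monotonicity: any repair $\krip'$ satisfies $\Trace(\krip') \subseteq \Trace(\krip)$, and since $\krip' \models \varphi$ is witnessed by traces $t_1,\dots,t_n \in \Trace(\krip')$ that satisfy the quantifier-free body $\psi$, the very same traces witness $\krip \models \varphi$. For the converse, given witness traces $t_1,\dots,t_n \in \Trace(\krip)$, I would build $\krip'$ by keeping exactly the transitions used by the paths underlying these traces and, for every remaining state, adding a single arbitrary outgoing transition of $\krip$ (which exists because $\krip$ has no deadlock). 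Then $\trans' \subseteq \trans$, so $\krip'$ is again acyclic and deadlock-free, $\{t_1,\dots,t_n\} \subseteq \Trace(\krip')$, and since $\psi$ is quantifier-free it depends only on the assigned traces, so $\krip' \models \varphi$.

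Given this equivalence, membership in \comp{NL} follows by model checking $\krip \models \varphi$. Since the number $n$ of quantifiers and the body $\psi$ are fixed with respect to the Kripke structure, I would take a constant-size automaton for $\psi$ and form its product with the $n$-fold self-composition of $\krip$; a configuration of this product is a tuple of $n$ states together with a constant amount of automaton state, which uses only $O(\log|\States|)$ bits. In an acyclic structure every trace is a finite path to a terminal state followed by that state's self-loop, so a satisfying assignment corresponds to a lasso whose loop is the product of these terminal self-loops. This reduces the existence of a satisfying assignment to a single reachability query in the product graph, which is solvable on the fly in \comp{NL}. This matches the known model checking complexity for this fragment~\cite{bf18}.

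For the lower bound I would reduce graph reachability (STCON), which is \comp{NL}-complete even on acyclic graphs. Given a directed acyclic graph with source $s$ and target $t$, take it as a Kripke frame with initial state $s$, label $t$ with a fresh atomic proposition $p$, and add a self-loop at every terminal state so that Definition~\ref{def:kripke} is respected. Taking $\varphi = \exists \pi.\ \F p_\pi$, the equivalence established above yields that a repair exists iff $\krip \models \varphi$ iff some trace of $\krip$ visits $t$ iff $t$ is reachable from $s$. Since the construction is computable in logarithmic space, this establishes \comp{NL}-hardness and completes the proof.

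The step I expect to require the most care is the membership argument: I must verify that the lasso/reachability encoding is faithful for acyclic structures, so that the constant suffixes contributed by the terminal self-loops correctly determine the truth of the temporal body $\psi$, and that the configuration width stays logarithmic for fixed $n$. By contrast, the deadlock-free completion in the converse direction of the equivalence is harmless precisely because $\varphi$ is existential: any extra traces it introduces can only preserve satisfaction, by the same monotonicity used in the forward direction.
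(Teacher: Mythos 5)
Your proposal is correct and follows essentially the same route as the paper: the key observation in both is that for purely existential formulas the repair problem coincides with the model checking problem, whose \comp{NL}-completeness for acyclic structures then transfers directly. The only differences are cosmetic --- the paper takes the original Kripke structure itself as the repair in the converse direction rather than constructing a pruned one, and cites the known model checking bounds instead of re-deriving the self-composition algorithm and the reachability reduction.
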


\begin{proof}
  For existential formulas, the repair problem is equivalent to the model 
checking problem. A given Kripke structure satisfies the formula iff it has a 
repair. If the formula is satisfied, the repair is simply the original 
Kripke structure.
  Since the model checking problem for existential formulas over acyclic graphs 
is  \comp{NL}-complete~\cite[Theorem 2]{bf18}, the same holds for the repair 
problem.\qed
\end{proof}

We now switch to the universal fragment.

\begin{theorem}
   \PR{A$^*$-HyperLTL}{\mbox{acyclic}} and
  \PR{EA$^*$-HyperLTL}{\mbox{acyclic}}
  is \linebreak \comp{NL}-complete in the size of the 
Kripke structure.
\label{thm:sys-acyc-a}
\end{theorem}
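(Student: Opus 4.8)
The plan is to reduce both repair problems to a single-trace \emph{existential} model-checking query, exploiting two facts: a purely universal body is downward closed under trace removal, and a \emph{single} leading existential quantifier can always be witnessed by the same trace the universal quantifiers range over. Concretely, I write the input formula as $\varphi = \exists\pi_0.\,\forall\pi_1\cdots\forall\pi_n.\,\psi$ (the $\mbox{A}^*$ case being the subcase with no $\pi_0$), with $\psi$ quantifier-free, and let $\psi'$ be the \LTL\ formula obtained from $\psi$ by deleting every trace subscript, i.e.\ replacing each $a_{\pi_i}$ by $a$. The claim I would prove is that a repair exists \emph{iff} $\krip$ has a path whose trace $t$ satisfies $\psi'$, equivalently $\krip \models \exists\pi.\,\psi'$.

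First I would establish this characterization in both directions. For the forward direction, suppose a repair $\krip'$ exists. Since $\krip'$ is deadlock-free it has at least one trace $t_0 \in \Trace(\krip', s_\init)$; instantiating $\pi_0$ with this witness and every universal variable with the \emph{same} trace $t_0$ forces the tuple $(t_0,\dots,t_0)$ to satisfy $\psi$, which is exactly $t_0 \models \psi'$. For the converse, given any path of $\krip$ whose trace $t$ satisfies $\psi'$, I build the substructure that keeps, for each state on that path, only the outgoing edge used by the path, and keeps one arbitrary outgoing edge for every off-path state (these stay unreachable, so no trace is added and no deadlock is created). Because the Kripke frame is acyclic, such a path visits each state at most once before settling into a terminal self-loop, so the choice is consistent and the result has the single trace set $\{t\}$; over a one-trace set every quantifier is forced onto $t$, so the structure satisfies $\varphi$ precisely because $(t,\dots,t)\models\psi$, i.e.\ $t\models\psi'$.

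With the characterization in hand, membership in \comp{NL} is immediate: deciding $\krip\models\exists\pi.\,\psi'$ over an acyclic Kripke structure is exactly the existential model-checking problem, which is in \comp{NL} in the size of the structure by Theorem~\ref{thm:sys-acyc-e} (and~\cite{bf18}), and the logspace transformation from $\psi$ to $\psi'$ does not touch the structure. For \comp{NL}-hardness I would reduce graph reachability (\comp{STCON}), which is \comp{NL}-complete even on acyclic graphs: given source $s$ and target $t$, I label the corresponding states with fresh propositions and use $\varphi = \forall\pi.\,\F\,(s_\pi \wedge \F\,t_\pi)$, so that by the characterization a repair exists iff some path of $\krip$ witnesses $s$ and then later $t$, i.e.\ iff $t$ is reachable from $s$.

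The delicate point, and the place where this fragment genuinely differs from \comp{E$^*$A$^*$}, is the backward direction: collapsing a repair to one trace is sound only because there is at most one existential witness to preserve. With a single $\pi_0$ I can let that witness coincide with the trace the universals see, but with two or more leading existentials the witnesses may need to be \emph{distinct} traces, so the collapse fails and one is forced to guess a set of witness traces---which is exactly what pushes \comp{E$^*$A$^*$} up to $\mathsf{\Sigma}^p_2$. I would therefore take care to verify that the one-trace substructure is a legal repair ($\States' = \States$, $s'_\init = s_\init$, $L' = L$, no deadlock) and that downward closure of the universal part is precisely what makes it safe to add no further traces.
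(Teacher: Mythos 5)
Your proof is correct and follows essentially the same route as the paper's: collapse any repair to a single trace that self-satisfies the body (sound precisely because the at most one leading existential can share its witness with the universal variables), decide the existence of such a trace in \comp{NL} over the acyclic frame, and obtain hardness from reachability in acyclic/ordered graphs via $\forall \pi.\ \F(s_\pi \wedge \F\, t_\pi)$. One small wording fix: in the forward direction $t_0$ must be taken to be the existential witness guaranteed by $\krip' \models \varphi$ (deadlock-freedom only guarantees that \emph{some} trace exists, not that an arbitrary one works as the witness for $\pi_0$); with that reading your argument coincides with the paper's.
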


\begin{proof}
 To solve the repair problem of a HyperLTL formula \linebreak $\varphi = 
\exists \pi.\ 
\forall \pi_1. \forall \pi_2 \ldots \forall \pi_m.\ \psi(\pi, \pi_1, \pi_2, 
\ldots, \pi_m)$ with at most one existential quantifier, which appears as the 
first quantifier,
it suffices to find a single trace $\pi$ that satisfies $\psi(\pi, \pi, \ldots, 
\pi)$: suppose there exists a repair that satisfies $\varphi$ and that has 
more than one path, then any repair that only preserves one of these paths 
also satisfies the universal formula $\varphi$.
For the upper bound, we nondeterministically guess a path for the trace $\pi$  
and remove all other paths. Since the length of the path is bounded by the size 
of the acyclic Kripke structure, we can guess the path using logarithmically 
many bits for a counter measuring the length of the path.

\comp{NL}-hardness of \PR{A$^*$-HyperLTL}{\mbox{acyclic}} follows from the \comp{NL-hardness} of the 
graph-reachability 
problem for ordered graphs~\cite{LENGAUER199263}. Ordered graphs are acyclic 
graphs with a vertex numbering that is a topological sorting of the vertices. 
We express graph reachability from 
vertex $s$ to vertex $t$ as the repair problem of the universal formula 
$\forall \pi .\ \F (s_\pi \wedge \F t_\pi)$.\qed
\end{proof}

\subsection{Formulas with Quantifier Alternation}

Next, we consider formulas where the number of quantifier
alternations is bounded by a constant $k$. We show that changing
the frame structure from 
trees to acyclic graphs results in a
significant increase in complexity (see 
Table~\ref{tab:system}). The complexity of the repair problem is 
similar to the model checking problem, with the repair problem being one level 
higher in the polynomial hierarchy (cf.~\cite{bf18}).

\begin{theorem} For $k\geq 2$,
\label{thrm:system-acyc-EAk1}
\PR{\mbox{(EA)}$k$\mbox{-HyperLTL}}{\mbox{acyclic}} is 
\comp{$\mathsf{\Sigma^p_{k}}$-complete} in the size of the Kripke structure.
 For $k \geq 1$,  \PR{\mbox{(AE)}$k$\mbox{-HyperLTL}}{\mbox{acyclic}} is
\comp{$\mathsf{\Sigma^p_{k+1}}$-complete} in the size of the Kripke structure.

\end{theorem}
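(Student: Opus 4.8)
The plan is to prove matching upper and lower bounds for both statements, exploiting the fact that over acyclic Kripke structures every trace is (essentially) finite and bounded in length by the size of the structure, so the entire machinery reduces to quantified reasoning over polynomially many, polynomially-long candidate paths.

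\paragraph*{Upper bounds.} For $\mathrm{(EA)}k$, consider $\varphi = \exists\pi_1\cdots\forall\pi_i\cdots\psi$ with $k$ alternations and leading existential quantifier. A repair $\krip'$ is determined by which transitions of $\trans$ we keep. The key observation is that the existentially quantified traces we actually need are only those witnesses that $\psi$ refers to; a repair can be taken to consist of exactly the union of the finitely many witness paths chosen for the outermost existential block, since removing unused paths can only help a formula whose body is then evaluated against the remaining trace set. I would formalize the $\mathsf{\Sigma}^p_k$ algorithm as an alternating guess-and-check: existentially guess the (polynomially many, polynomially long) paths assigned to the leading $\exists$-block \emph{and} the repaired edge set $\trans'$, then universally guess the $\forall$-block assignments from $\Trace(\krip',s_\init)$, alternating down through the $k$ blocks, and finally verify $\psi$ on the chosen tuple in polynomial time (the body is a fixed LTL-style formula evaluated over finitely many finite traces). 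Each path is guessed with a logarithmic-length counter as in Theorem~\ref{thm:sys-acyc-a}, and each alternation of the quantifier prefix costs exactly one alternation of the $\mathsf{\Sigma}^p$ hierarchy, giving $\mathsf{\Sigma}^p_k$. The subtlety to get right is that the inner universal quantifiers range over traces of the repaired structure $\krip'$, not of $\krip$; but since $\trans'$ was fixed by the leading existential guess, checking ``is this path in $\krip'$'' is an \comp{NL} reachability test that is absorbed into the current level. For $\mathrm{(AE)}k$ with leading $\forall$, the repair decision (the guess of $\trans'$) is itself existential, so it adds one extra leading existential level on top of the $k$-alternation quantifier prefix, yielding $\mathsf{\Sigma}^p_{k+1}$; this is exactly the ``one level higher than model checking'' phenomenon noted in the theorem's preamble, and it is the structural reason the two cases differ by one.

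\paragraph*{Lower bounds.} For hardness I would reduce from the canonical $\mathsf{\Sigma}^p_k$-complete quantified Boolean satisfiability problem $\mathrm{Q}_k\mathrm{SAT}$ (a prefix of $k$ alternating blocks of Boolean quantifiers over a quantifier-free matrix). The reduction reuses and scales up the gadgets from the tree case: encode each Boolean variable as a bit-sequence path, use atomic propositions $\pos$ and $\negt$ together with the clause-numbering propositions $c$ and $h$ exactly as in the proof of Theorem~\ref{thrm:sys-tree-aae-lower}, but now exploit the acyclic (DAG) sharing of suffixes so that the repair's choice of which outgoing transition to keep at a branching state encodes a Boolean assignment to the corresponding quantified variable. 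The HyperLTL quantifier prefix is set to mirror the $\mathrm{Q}_k\mathrm{SAT}$ prefix block-for-block: a $\forall$ Boolean block becomes a $\forall$-trace block (ranging over the surviving assignment-encoding paths), an $\exists$ Boolean block becomes an $\exists$-trace block, and the quantifier-free matrix is encoded by a constant-size body that checks clause satisfaction and enforces consistency of the assignment (no variable set to both truth values), again via the $c$/$h$ bit-comparison idiom. For $\mathrm{(AE)}k$, the existential \emph{repair choice} supplies the topmost Boolean block, which is precisely why one more alternation of the formula prefix is needed to reach $\mathsf{\Sigma}^p_{k+1}$, matching the upper bound.

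\paragraph*{Main obstacle.} The delicate point, and the step I expect to consume the most care, is the interaction between the repair's edge deletions and the \emph{universal} trace quantifiers in the acyclic (DAG) setting. Unlike trees, a DAG allows two distinct assignment-encoding paths to share states, so deleting a transition to ``switch off'' one Boolean assignment may inadvertently destroy another path that a universal quantifier still needs to range over, or leave a state without a successor (forbidden by Definition~\ref{def:kripke}). I would handle this by engineering the gadget so that each variable's two truth-value branches diverge at a dedicated branching state and only re-merge after the clause-checking region, guaranteeing that the surviving path set is in exact correspondence with consistent Boolean assignments and that no deadlock is ever introduced. Verifying this correspondence—that repairs are in bijection with satisfying $\mathrm{Q}_k\mathrm{SAT}$ strategies, with the right quantifier polarity at each level—is the heart of the argument, and I would prove it by induction on the quantifier prefix, peeling off one Boolean block per HyperLTL quantifier block.
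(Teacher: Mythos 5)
Your overall strategy---guess the repair together with the leading existential traces, then recurse through the quantifier prefix; for hardness, reduce from QBF with diamond gadgets, clause counters, and preservation constraints---is the same as the paper's. But there is a genuine gap in the level accounting, on both sides. For the upper bound, the algorithm you describe explicitly alternates through \emph{every} quantifier block and only then verifies $\psi$ on a fully instantiated tuple. A formula with $k$ alternations has $k+1$ blocks, so even after merging the repair guess with the leading existential block you get $\mathsf{\Sigma^p_{k+1}}$ for \comp{(EA)$k$} (and $\mathsf{\Sigma^p_{k+2}}$ for \comp{(AE)$k$}), one level too many. The missing ingredient is that the innermost, alternation-free block must \emph{not} cost an alternation: once all outer traces are fixed, that block reduces to a reachability question on a polynomial-size self-composition of the acyclic structure and is decidable in \comp{NL}. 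The paper obtains exactly this by handing the remaining formula ($k-1$ alternations, leading $\forall$) to the acyclic model checker, which runs in $\mathsf{\Pi^p_{k-1}}$; without that absorption step your slogan ``each alternation costs one level'' is not implemented by your algorithm.

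The same off-by-one infects your lower bound for the leading-existential case. If you mirror the $\mathrm{Q}_k\mathrm{SAT}$ prefix block-for-block into trace quantifiers, a $k$-block QBF yields a formula with only $k-1$ alternations; were that reduction sound it would make \PR{\mbox{(EA)}$(k{-}1)$\mbox{-HyperLTL}}{\mbox{acyclic}} $\mathsf{\Sigma^p_k}$-hard, contradicting the upper bound one line above. The correct reduction (which you state only for the \comp{(AE)$k$} case) lets the \emph{repair itself} resolve the outermost existential Boolean block---by choosing which diamond branches survive---so that only QBF blocks $2,\dots,k$ become trace quantifiers, and the alternation budget thus freed is spent on the extra innermost quantifiers needed to iterate over clauses and to enforce, via the counter/valuation idiom, that no clause path and no diamond for a non-outermost variable is deleted. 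Relatedly, your ``main obstacle'' about suffix-sharing in DAGs is not where the difficulty lies: the paper's gadget keeps the clause paths and the variable-diamond chain as disjoint branches from the initial state, and the real work is in the preservation constraints that pin down exactly which deletions a repair may perform.
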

 
\noindent {\em  Proof sketch.} \ 
For the upper bound, suppose that the first quantifier is existential.
Since the Kripke structure is acyclic, the length of the traces is bounded by 
the number of states. We can thus nondeterministically guess the repair and 
the existentially quantified traces in polynomial time, and then verify the 
correctness of the guess by model checking the remaining formula, which has 
$k-1$ quantifier alternations and begins with a universal quantifier. The 
verification can be done in \comp{$\mathsf{\Pi^p_{k-1}}$}~\cite[Theorem 
3]{bf18}. Hence, the repair problem is in \comp{$\mathsf{\Sigma^p_{k}}$}. 
Analogously, if the first quantifier is universal, the model checking problem 
in \comp{$\mathsf{\Pi^p_{k}}$} and the repair problem in 
\comp{$\mathsf{\Sigma^p_{k+1}}$}.

We establish the lower bound via a reduction from the {\em 
quantified Boolean formula} (QBF) satisfiability problem~\cite{gj79}.
The Kripke structure (see~Fig.~\ref{fig:system-acyclic-qbf}) contains a 
path for each clause, and a separate structure that consists of a sequence of 
diamond-shaped graphs, one for each variable. A path through the diamonds 
selects a truth value for each variable, by going right or left, respectively, 
at the branching point.

\begin{figure}[t]
\centering
\scalebox{.8}{
\pgfdeclarelayer{bg}    
\pgfsetlayers{bg,main}  

\begin{tikzpicture}

\coordinate (init) at (0, 0);

\node[draw,circle,text width=0.3cm,fill=white] (initstate) at ($ (init) + 
(3, 1) $) {};
\draw[->] ($ (initstate) + (-.5, .5) $) -- (initstate);

\node[draw,circle,text width=0.3cm,fill=black!60] (sh0) at (init) {};

\foreach \i/\j in {0/1,1/2, 2/3}
{

\draw [fill=white] ($ (sh\i) + (-1,-1) $) ellipse (.65cm and .3cm) node (s\i)
{$\{q^{\j}, p\}$};

\draw [fill=white]($ ({sh\i}) + (1,-1) $) ellipse (.65cm and .3cm) node (sb\i) 
{$\{q^{\j}, \bar{p}\}$};

\node[draw,circle,text width=0.25cm,fill=black!60] (sh\j) at ($ (sh\i) +
(0,-2) $)  {};

\draw [->] (sh\i) -- (s\i);
\draw [->] (sh\i) -- (sb\i);
\draw [->] (s\i) -- (sh\j);
\draw [->] (sb\i) -- (sh\j);

}

\path (sh3) edge [loop below] (1);

\foreach \i/\j in {0/1,1/2}
{

\node[draw,circle,text width=0.25cm,fill=black] (u\i0) at ($ (init) + 
(\j*3+.5,0) $) [text=white]{$\hspace*{-0.1cm}\{c\}$};

\foreach \x/\y in {0/1,1/2, 2/3}
{

\ifthenelse{\i = 0}
{
\ifthenelse{\x = 0}
{
\draw [fill=white]($ (u\i\x) + (0,-1) $) ellipse (.65cm and .3cm) node (v\i\x)
{$\{q^{\y}, p\}$};
}{}
\ifthenelse{\x = 1}
{
\draw [fill=white]($ (u\i\x) + (0,-1) $) ellipse (.65cm and .3cm) node (v\i\x)
{$\{q^{\y}, \bar{p}\}$};
}{}
\ifthenelse{\x = 2}
{
\draw [fill=white]($ (u\i\x) + (0,-1) $) ellipse (.65cm and .3cm) node (v\i\x)
{$\{q^{\y}, p\}$};
}{}
}{}

\ifthenelse{\i = 1}
{
\ifthenelse{\x = 0}
{
\draw [fill=white]($ (u\i\x) + (0,-1) $) ellipse (.65cm and .3cm) node (v\i\x)
{$\{q^{\y}, \bar{p}\}$};
}{}
\ifthenelse{\x = 1}
{
\draw [fill=white]($ (u\i\x) + (0,-1) $) ellipse (.65cm and .3cm) node (v\i\x)
{$\{q^{\y}, {p}\}$};
}{}
\ifthenelse{\x = 2}
{
\draw [fill=white]($ (u\i\x) + (0,-1) $) ellipse (.65cm and .3cm) node (v\i\x)
{$\{q^{\y}, \bar{p}\}$};
}{}
}{}

\node[draw,circle,text width=0.25cm,fill=black!60] (u\i\y) at ($ (v\i\x) +
(0,-1) $)  {};

\draw [->] (u\i\x) -- (v\i\x);
\draw [->] (v\i\x) -- (u\i\y);
}

\path (u\i3) edge [loop below] (\i);

}

\draw [->] (initstate) -- (sh0);
\draw [->] (initstate) -- (u00);
\draw [->] (initstate) -- (u10);

\begin{pgfonlayer}{bg}

\draw [rounded corners=10,,dashed,fill=black!15] ($ (init) + (-2,.5) $) 
rectangle ++(4,-7.5) node (r1) {};
\node (text1) at ($ (init) +  (-1,.8) $) {$\pi_d$ traces};

\draw [rounded corners=10,,dashed,fill=black!5] ($ (init) + (2.5,.5) $) 
rectangle ++(5,-7.5) node (r2) {};
\node (text1) at ($ (init) +  (7,.8) $) {$\pi'$ traces};

\end{pgfonlayer}

\end{tikzpicture}
}
\caption{Kripke structure for the formula $y = \exists x_1.\forall 
x_2.\exists x_3.(x_1 \vee \neg x_2 \vee x_3) \wedge 
(\neg x_1 \vee x_2 \vee \neg x_3)$.}
\label{fig:system-acyclic-qbf}
\end{figure}
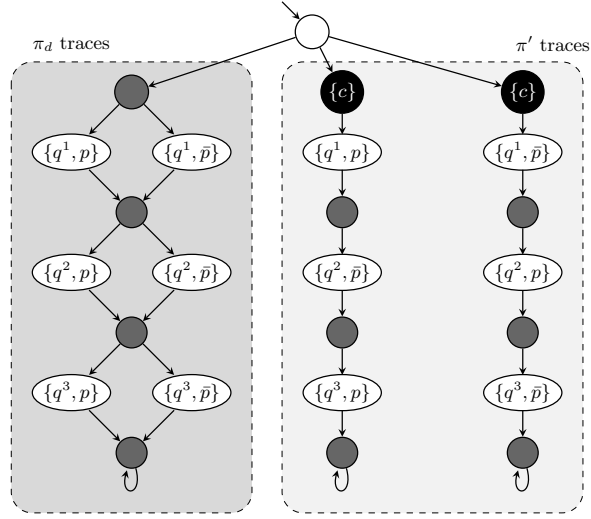

In our reduction, the quantifiers in the QBF instance are translated to trace 
quantifiers (one per alternation depth), resulting in a HyperLTL formula with 
$k$ quantifier alternations and a leading 
existential quantifier. Note that, the outermost existential quantifiers are 
not translated to a quantifier, but instead 
resolved by the repair. For this reason, it suffices to build a HyperLTL 
formula with one less quantifier alternation than the original QBF instance.
Also, in our mapping, we must make sure that the clauses and the diamonds for 
all variables except the outermost existential variables are not removed during 
the repair. Similar to the proof of Theorem~\ref{thrm:sys-tree-aae-lower}, we 
add a counter to the clauses and add a constraint to the HyperLTL formula that 
ensures that all counter values are still present in the repair; for the 
diamonds of the variables, the valuations themselves form such a counter, and 
we add a constraint that ensures that all valuations for the variables (except 
for the outermost existential variables) are still present in the 
repair.\qed

\medskip

Finally, Theorem~\ref{thrm:system-acyc-EAk1} implies that the repair 
problem
for acyclic Kripke structures and HyperLTL formulas with
an arbitrary 
number of quantifiers is in \comp{PSPACE}.

\begin{corollary}
\label{cor:sys-acyclic-hltl}
\PR{{HyperLTL}}{\mbox{acyclic}} is in \comp{PSPACE} in the size of the 
Kripke structure.
 
\end{corollary}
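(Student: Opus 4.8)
The plan is to read the corollary as an immediate consequence of Theorem~\ref{thrm:system-acyc-EAk1}, together with the standard fact that the whole polynomial hierarchy is contained in \comp{PSPACE}. The crucial observation is that, although full HyperLTL permits an \emph{arbitrary} number of quantifier alternations, the complexity asserted here is measured \emph{in the size of the Kripke structure}: the formula~$\varphi$ is fixed as part of the problem instance, so its number~$k$ of quantifier alternations is a \emph{constant} and does not grow with the input.

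Concretely, I would fix an arbitrary closed HyperLTL formula $\varphi$ and let $k$ denote its number of quantifier alternations. If $\varphi$ begins with an existential quantifier, Theorem~\ref{thrm:system-acyc-EAk1} places the repair problem in $\mathsf{\Sigma^p_k}$; if it begins with a universal quantifier, the same theorem places it in $\mathsf{\Sigma^p_{k+1}}$. The alternation-free and single-alternation corner cases not covered by that theorem (the fragments $\mathrm{E}^*$, $\mathrm{A}^*$, and $\mathrm{EA}^*$) are handled by Theorems~\ref{thm:sys-acyc-e} and~\ref{thm:sys-acyc-a}, which give \comp{NL}-completeness. In every case the problem sits at some fixed level $\mathsf{\Sigma^p_j}$ of the polynomial hierarchy, with the index $j$ determined solely by $\varphi$.

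It then remains to invoke the textbook inclusion of the polynomial hierarchy $\mathsf{PH}=\bigcup_{j}\mathsf{\Sigma^p_j}$ in \comp{PSPACE}, which in particular places each fixed level $\mathsf{\Sigma^p_j}$ inside \comp{PSPACE} (and \comp{NL} $\subseteq$ \comp{PSPACE} covers the alternation-free cases). Since $j$ is a constant that does not scale with the Kripke structure, membership in a single level of the hierarchy already yields a \comp{PSPACE} decision procedure, and the corollary follows.

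The step requiring the most care is not algorithmic but a matter of bookkeeping: I would make explicit that ``an arbitrary number of quantifiers'' refers to the \emph{family} of all HyperLTL formulas rather than to a single input whose alternation depth grows, so that each concrete instance genuinely lands in a fixed level $\mathsf{\Sigma^p_j}$. One should also note that the polynomial time bound implicit in $\mathsf{\Sigma^p_j}$ depends on $j$, and hence on $\varphi$; because $\varphi$ is fixed, this dependence is absorbed into the constant and poses no threat to the \comp{PSPACE} bound measured in the size of the Kripke structure.
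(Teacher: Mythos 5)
Your proposal is correct and matches the paper's own (very terse) justification: the corollary is stated as an immediate consequence of Theorem~\ref{thrm:system-acyc-EAk1}, with the formula fixed so that its alternation depth $k$ is a constant and the resulting level $\mathsf{\Sigma^p_k}$ (or $\mathsf{\Sigma^p_{k+1}}$) of the polynomial hierarchy is contained in \comp{PSPACE}. Your additional bookkeeping about the corner cases covered by Theorems~\ref{thm:sys-acyc-e} and~\ref{thm:sys-acyc-a} and about the constant depending on $\varphi$ only makes explicit what the paper leaves implicit.
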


\section{Complexity of Repair for General Graphs}
\label{sec:general}

In this section, we investigate the complexity of the repair problem for 
general graphs. We again begin with the alternation-free fragment and then continue with formulas with quantifier alternation.

\subsection{The Alternation-free Fragment}

We start with the existential fragment. Similar to the case of acyclic graphs, 
the repair problem can be solved with a model checking algorithm.

\begin{theorem}
\label{thm:sys-general-e}
\PR{$\mbox{E}^*$-HyperLTL}{\mbox{general}} is \comp{NL-complete} in the size of 
the 
Kripke structure.
\end{theorem}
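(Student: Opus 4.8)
The plan is to reuse the key observation behind Theorem~\ref{thm:sys-acyc-e}: for purely existential formulas the repair problem collapses to the model checking problem, and this collapse is insensitive to the shape of the frame. Writing $\varphi = \exists\pi_1\cdots\exists\pi_n.\,\phi$ with quantifier-free body $\phi$, the first step is to prove the equivalence ``$\krip$ admits a repair for $\varphi$ iff $\krip\models\varphi$''. The backward direction is immediate, since $\krip'=\krip$ is itself a legal repair: it trivially satisfies $\States'=\States$, $s'_\init=s_\init$, $\trans'\subseteq\trans$, $L'=L$, and introduces no deadlock. For the forward direction I would invoke monotonicity of existential quantification in the trace set: satisfaction of the quantifier-free $\phi$ depends only on the assigned traces, so any repair $\krip'$, whose traces satisfy $\Trace(\krip',s'_\init)\subseteq\Trace(\krip,s_\init)$, supplies $n$ witness traces that already lie in $\Trace(\krip,s_\init)$ and hence witness $\varphi$ in $\krip$ as well. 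This equivalence reduces both bounds to model checking.

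For the upper bound I would show that model checking an existential HyperLTL formula over a \emph{general} Kripke structure is in \comp{NL}. Unlike the acyclic case, where a bounded-length path witness suffices, here the witness traces may be cyclic, so instead of guessing a finite path I would guess a lasso. Concretely, form the $n$-fold self-composition of $\krip$ synchronized with a fixed-size automaton for $\phi$, and test this product for nonemptiness. A product state is an $n$-tuple of states of $\krip$ together with a constant amount of automaton information, so it is describable in $O(\log|\krip|)$ bits for a fixed formula; guessing a reachable accepting cycle and verifying it on the fly is exactly the standard \comp{NL} emptiness test for B\"uchi automata. Thus model checking, and with it the repair problem, is in \comp{NL} in $|\krip|$.

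For the lower bound I would reduce directed graph reachability (STCON), which is \comp{NL}-complete on general graphs, to the repair problem, in the same spirit as Theorems~\ref{thrm:sys-tree-ea} and~\ref{thm:sys-acyc-a}. Taking the source as the initial state and marking source and target with dedicated atomic propositions $s$ and $t$ (and adding self-loops on sinks to respect Definition~\ref{def:kripke}), the target is reachable from the source iff $\krip\models\exists\pi.\,\F(s_\pi\wedge\F t_\pi)$, which by the equivalence above holds iff this instance of the repair problem has a solution. Together with \comp{NL} membership, this yields \comp{NL}-completeness.

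I expect the only real obstacle to be the upper bound, namely confirming that passing from acyclic to general frames does not raise the model-checking cost above \comp{NL}. The decisive point is that the formula is fixed, so $n$ and the body automaton are constants; the product stays polynomial in $|\krip|$ and its vertices admit logarithmic descriptions, which keeps lasso detection an \comp{NL} reachability-style computation rather than requiring a full nondeterministic guess of the repair as in the tree case for full HyperLTL.
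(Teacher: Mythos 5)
Your proposal is correct and follows essentially the same route as the paper: both rest on the observation that for purely existential formulas the repair problem coincides with the model checking problem (the original structure is a repair of itself, and any repair's traces are a subset of the original's, so existential witnesses transfer by monotonicity), after which \comp{NL}-completeness is inherited from existential HyperLTL model checking on general graphs. The only difference is that you re-derive the model checking bounds explicitly (self-composition with lasso detection for membership, an STCON reduction for hardness) where the paper simply cites the known \comp{NL}-completeness result.
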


\begin{proof}
  Analogously to the proof of Theorem~\ref{thm:sys-acyc-e}, we note that, for 
existential formulas, the repair problem is equivalent to the model checking 
problem. A given Kripke structure satisfies the formula if and only if it has a 
repair. If the formula is satisfied, the repair is simply the original Kripke 
structure. Since the model checking problem for existential formulas for 
general graphs is \comp{NL}-complete~\cite{frs15}, the same holds for the 
repair problem.\qed
\end{proof}

Unlike the case of acyclic graphs, the repair problem for the universal 
fragment is more expensive, although the model checking problem is 
\comp{NL-complete}~\cite{bf18}. 

\begin{theorem}
\PR{$\mbox{A}^+$-HyperLTL}{\mbox{general}} is \comp{NP-complete} in the size 
of the Kripke structure.
\label{thm:sys-general-a}
\end{theorem}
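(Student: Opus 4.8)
The plan is to prove membership in \comp{NP} by a guess-and-check argument and \comp{NP}-hardness by a reduction from 3SAT, building a formula with two universal quantifiers (which already lies in the \comp{A$^+$} fragment).

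For the upper bound, I would first observe that a repair is completely described by its transition relation $\trans' \subseteq \trans$, a certificate of size polynomial in the Kripke structure. The algorithm nondeterministically guesses $\trans'$, checks in polynomial time that $\krip' = \ktupleprime$ is deadlock-free (every state retains an outgoing edge) and that $s_\init$ is preserved, and then verifies $\krip' \models \varphi$. Since $\varphi$ is purely universal, its model checking over general graphs is in \comp{NL}, and hence polynomial, in the size of $\krip'$. A guess followed by a polynomial verification places the problem in \comp{NP}.

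For the lower bound, the hard part is that, unlike the acyclic case (Theorem~\ref{thm:sys-acyc-a}), a universal repair over a general graph cannot in general be pruned to a single path: a path that revisits a branching state forces the repair to retain several outgoing edges, and hence several traces, over which the universal formula acquires real force. The plan is to exploit exactly this. Given a 3SAT instance over variables $x_1,\dots,x_n$ and clauses $C_1,\dots,C_m$, I would build a Kripke structure consisting of a cyclic \emph{assignment gadget} --- a loop threading through a branching state for each variable $x_i$, with two outgoing edges (one for $x_i=\tru$, one for $x_i=\false$) --- together with a \emph{clause region}, reachable on the same mandatory loop, in which each clause $C_j$ offers one branch per literal, labeled by $\pos$ and $\negt$. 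Deadlock-freedom is the only mechanism that forces edges to be kept, so it is made to carry the combinatorial weight: at each variable's branching state at least one value-edge must survive, and in each clause gadget at least one literal branch must survive, otherwise the loop deadlocks. I would then write $\forall \pi_1. \forall \pi_2.\ \psi$ forbidding two synchronized ``bad'' patterns: (i)~a trace keeping the $\tru$-edge of $x_i$ paired with a trace keeping the $\false$-edge of $x_i$ (forcing each variable to retain a \emph{unique} value, which with deadlock-freedom makes it exactly one); and (ii)~a trace whose clause gadget claims a literal paired with a trace whose assignment gadget contradicts it. As in the proof of Theorem~\ref{thrm:sys-tree-aae-lower}, I would use counter propositions $c$ and $h$ to mark positions so that $\psi$ aligns the two quantified traces on the \emph{same} variable index when performing comparison~(ii).

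I expect the correctness direction to run as follows: a satisfying assignment yields the repair that keeps, per variable, only the assigned value-edge and, per clause, only one branch for a literal the assignment makes true; this repair is deadlock-free and contains no bad pattern, so it satisfies $\varphi$. Conversely, any repair satisfying $\varphi$ must fix exactly one value per variable (by (i) and deadlock-freedom) and keep at least one literal branch per clause (by deadlock-freedom) which, by (ii), must be a literal true under the committed assignment --- so every clause is satisfied. The main obstacle I anticipate is precisely the point separating the general case from the acyclic case: I must design the loop and counters so that no ``cheating'' repair exists --- in particular so that the clause region cannot simply be bypassed or pruned away, and so that the synchronization enforced by $c$ and $h$ genuinely matches clause literals against the committed assignment across the cyclic trace. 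Arranging the $\U$/$\X$ skeleton of $\psi$ so that these comparisons line up for every variable index, while keeping $\psi$ of constant size and the whole reduction logspace-computable, is the delicate part.
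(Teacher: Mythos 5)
Your upper bound is exactly the paper's argument: nondeterministically guess the sub-transition relation, check deadlock-freedom, and model-check the universal formula against the candidate in polynomial time. For the lower bound, however, the paper builds no gadget at all: a single universal quantifier $\forall\pi.\,\psi$ just says that every trace of the repaired structure satisfies the LTL formula $\psi$, so \comp{NP}-hardness is inherited in one line from the known \comp{NP}-hardness of the LTL repair problem~\cite{bek09}. Your diagnosis of \emph{why} general graphs are harder than acyclic ones is correct and worth keeping --- a lasso may revisit a branching state with different successors, so the transition set induced by one path generates additional traces and the single-path pruning of Theorem~\ref{thm:sys-acyc-a} fails --- but that observation already places the hardness in the one-quantifier (i.e., LTL) case, so no two-quantifier 3SAT gadget is needed.

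As for the reduction you sketch, the step you flag as ``delicate'' is a genuine gap rather than a detail. Comparison~(ii) must relate the literal kept in a clause gadget of $\pi_1$ at one offset along the loop to the value kept in the corresponding variable gadget of $\pi_2$ at a \emph{different} offset. HyperLTL evaluates all quantified traces synchronously, and on your single mandatory loop (with equal-length branches) all traces are position-aligned, so a constant-size purely universal formula has no way to bring these two positions into alignment; the $c$/$h$ counter trick of Theorem~\ref{thrm:sys-tree-aae-lower} works there precisely because the compared counters sit at the \emph{same} offset on two different branches from the root. You would either need a formula whose size grows with the number of variables (which only yields hardness in the combined size, not in the size of the Kripke structure for a fixed formula) or a substantially different layout. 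Since the citation route is available and already lands in the $\mbox{A}^+$ fragment, the gadget should be dropped.
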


\begin{proof} For membership in \comp{NP}, we nondeterministically guess a 
solution to 
the repair problem, and verify the correctness of the universally 
quantified HyperLTL formula against the solution in polynomial time in the size 
of the Kripke structure. 
\comp{NP}-hardness follows from the \comp{NP-hardness} of the repair 
problem for LTL~\cite{bek09}.
%
%

\end{proof}

\subsection{Formulas with Quantifier Alternation}

Next, we consider formulas where the number of quantifier
alternations is 
bounded by a constant $k$. We show that changing
the frame structure from 
acyclic to general graphs again results in a
significant increase in complexity (see 
Table~\ref{tab:system}).

\begin{theorem}
  \label{thrm:system-general-EAk}
\PR{$\mbox{E}^*\mbox{A}^*\mbox{-HyperLTL}$}{\mbox{general}} is in
\comp{PSPACE} in the size of the Kripke structure.
\PR{$\mbox{A}^*\mbox{E}^*\mbox{-HyperLTL}$}{\mbox{general}} is
\comp{PSPACE}-complete in the size of the Kripke structure.
  For $k \geq 2$,
\PR{$\mbox{(EA)}^k\mbox{-HyperLTL}$}{\mbox{general}} and
\PR{$\mbox{(AE)}^k\mbox{-HyperLTL}$}{\mbox{general}}
are
\comp{$(k{-}1)$-EXPSPACE}-complete in the size of the Kripke structure.
\end{theorem}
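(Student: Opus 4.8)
\noindent\emph{Proof plan.} \
For the upper bounds, the plan is to decouple the structural search from the temporal reasoning. I would iterate over every candidate repair, that is, over every deadlock-free subset $\trans' \subseteq \trans$, maintaining an index of $|\trans|$ bits that is reused across candidates, and for each candidate model check $\krip' \models \varphi$. The ingredient I rely on is the system complexity of HyperLTL model checking on general structures \cite{frs15,bf18}: for a formula with $k$ quantifier alternations the standard construction performs $k$ B\"uchi complementations, so the fixed-size formula automaton is first composed with the Kripke structure polynomially and then exponentiated $k$ times; checking emptiness of this $k$-fold exponential automaton on the fly stores a single state and therefore runs in $(k{-}1)$-fold exponential space in $|\krip|$, i.e.\ in \comp{PSPACE} for $k=1$. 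Composing the polynomial-space enumeration with this routine and reusing the model-checking space places \PR{$\mbox{(EA)}^k\mbox{-HyperLTL}$}{\mbox{general}} and \PR{$\mbox{(AE)}^k\mbox{-HyperLTL}$}{\mbox{general}} in \comp{$(k{-}1)$-EXPSPACE}, and in particular \PR{$\mbox{E}^*\mbox{A}^*\mbox{-HyperLTL}$}{\mbox{general}} and \PR{$\mbox{A}^*\mbox{E}^*\mbox{-HyperLTL}$}{\mbox{general}} in \comp{PSPACE}.

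For the matching lower bounds, I would reduce from HyperLTL model checking itself, which by \cite{frs15,bf18} is already $(k{-}1)$-\comp{EXPSPACE}-hard in $|\krip|$ for $k$-alternation formulas (and \comp{PSPACE}-hard for $k=1$), since under a fixed formula the Kripke structure can encode the configurations of a space-bounded computation; equivalently one can reduce directly from the acceptance problem of a $(k{-}1)$-fold exponential space Turing machine. The difficulty is that a repair may only delete behaviour, so for a universal specification a repair is strictly easier to obtain than model-checking success, and the structural choice must be neutralised. The plan is to attach a preservation gadget in the style of the construction in the proof of Theorem~\ref{thrm:sys-tree-aae-lower}: fresh propositions index the transitions of the original structure by a binary counter, and a constant-size subformula asserts that every counter value is still realised after the repair. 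This forces every admissible repair to retain $\trans$ in full, so a repair exists if and only if the original structure satisfies the formula, transferring the model-checking hardness to the repair problem.

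The main obstacle is keeping the augmented formula inside the claimed fragment. The preservation subformula has the shape $\forall\exists$, so when it is conjoined with $\varphi$ and put in prenex form its quantifiers can be interleaved into existing blocks of $\varphi$ of the same type without increasing the alternation depth, provided $\varphi$ already offers a leading universal block (the lead-universal fragment) or at least two alternations; this yields \comp{$(k{-}1)$-EXPSPACE}-completeness for \PR{$\mbox{(EA)}^k\mbox{-HyperLTL}$}{\mbox{general}} and \PR{$\mbox{(AE)}^k\mbox{-HyperLTL}$}{\mbox{general}} and \comp{PSPACE}-completeness for \PR{$\mbox{A}^*\mbox{E}^*\mbox{-HyperLTL}$}{\mbox{general}}. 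For the single-alternation lead-existential fragment $\exists^*\forall^*$, however, conjoining a $\forall\exists$ gadget to an $\exists\forall$ prefix unavoidably produces two alternations, so the reduction leaves the fragment; this is precisely why only membership in \comp{PSPACE} is claimed for \PR{$\mbox{E}^*\mbox{A}^*\mbox{-HyperLTL}$}{\mbox{general}}. The remaining effort is routine: the \comp{PSPACE} bound for \PR{$\mbox{E}^*\mbox{A}^*\mbox{-HyperLTL}$}{\mbox{general}} is the $k=1$ instance of the enumeration argument, and closure of the EXPSPACE hierarchy under complementation lets the lead-existential and lead-universal cases share a single upper bound.
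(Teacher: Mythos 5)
Your proposal follows essentially the same route as the paper: the upper bounds come from enumerating all candidate sub-transition-relations in the available space and invoking the known system complexity of HyperLTL model checking, and the lower bounds come from a reduction from model checking in which a counter-based preservation gadget forces the only admissible repair to be the original structure, so that repair and model checking coincide. The only cosmetic difference is in the gadget --- the paper numbers the successors of each state and adds extra states whose traces generate all number sequences, with a $\forall\exists$ constraint matching each sequence to a trace of the original part, whereas you index the transitions directly in the style of the 3SAT construction --- but the mechanism, and your explanation of why $\exists^*\forall^*$ only receives a membership claim, are consistent with the paper's argument.
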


\noindent {\em Proof idea.} \
The claimed complexities are those of the model checking 
problem~\cite{markus}.
  We prove that the repair problem has the same complexity as the model 
checking problem.
  To show the upper bound of 
\PR{$\mbox{A}^*\mbox{E}^*\mbox{-HyperLTL}$}{\mbox{general}}, we enumerate, in 
\comp{PSPACE}, all possible repairs, 
and then verify against the HyperLTL formula.

  For the lower bounds, we modify the Kripke structure and the HyperLTL formula 
such that the only possible repair is the unchanged Kripke structure.
  After the modification, the repair problem thus has the same result as the 
model checking problem.
  The idea of the modification is to assign numbers to the successors of each 
state. We add extra states such that the
  traces that originate from these states correspond to all possible number 
sequences. Finally, the HyperLTL formula states
  that for each such number sequence there exists a corresponding trace in the 
original Kripke structure.
\qed

\medskip

Finally, Theorem~\ref{thrm:system-general-EAk} implies that the repair 
problem
for general Kripke structures and HyperLTL formulas with
an arbitrary 
number of quantifiers is in \comp{NONELEMENTARY}.

\begin{corollary}
\label{cor:sys-general-hltl}
\PR{{HyperLTL}}{\mbox{general}} is \comp{NONELEMENTARY} in the size of the 
Kripke structure. 
\end{corollary}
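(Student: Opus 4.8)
The plan is to derive both directions directly from Theorem~\ref{thrm:system-general-EAk} by letting the number $k$ of quantifier alternations range over all natural numbers; no new construction is needed, only a limiting argument over the per-level bounds. For decidability together with the per-formula upper bound, I would observe that every HyperLTL sentence $\varphi$ has some finite number $k$ of quantifier alternations. By Theorem~\ref{thrm:system-general-EAk}, the repair problem for $\varphi$ over a general Kripke structure is then solvable in $(k-1)$-\comp{EXPSPACE}, and in particular is decidable. Hence \PR{{HyperLTL}}{\mbox{general}} is decidable, with the space bound forming a tower of exponentials whose height is governed by the alternation depth of the input formula.

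For the lower bound, I would invoke the hardness half of Theorem~\ref{thrm:system-general-EAk}: for every $k \geq 2$, the fragment \PR{$\mbox{(EA)}^k\mbox{-HyperLTL}$}{\mbox{general}} is $(k-1)$-\comp{EXPSPACE}-hard. Since each such fragment is syntactically contained in full HyperLTL, the repair problem for the full logic inherits $(k-1)$-\comp{EXPSPACE}-hardness for arbitrarily large $k$. Consequently no fixed tower of exponentials can bound the complexity, so the problem lies outside \comp{ELEMENTARY}, which is precisely the content of the \comp{NONELEMENTARY} claim.

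The only genuine subtlety — rather than a real obstacle — is making precise what ``\comp{NONELEMENTARY} in the size of the Kripke structure'' asserts. For a single fixed formula the complexity measured in $|\krip|$ collapses to one level of the exponential-space hierarchy, so the \comp{NONELEMENTARY} statement must be read uniformly: as the alternation depth of the (input) formula grows, the space required as a function of $|\krip|$ cannot be capped by any elementary function. I would therefore present the corollary as the combination of (i) decidability from the family of per-$k$ upper bounds and (ii) the absence of any fixed elementary bound, witnessed by the $(k-1)$-\comp{EXPSPACE}-hardness results holding for all $k \geq 2$. Together these two observations pin the repair problem exactly at \comp{NONELEMENTARY}, completing the argument.
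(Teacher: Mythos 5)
Your proposal is correct and follows essentially the same route as the paper: the corollary is obtained directly from Theorem~\ref{thrm:system-general-EAk} by letting the alternation depth $k$ range over all naturals, taking the per-$k$ $(k{-}1)$-\comp{EXPSPACE} upper bounds for decidability and the per-$k$ hardness results to rule out any fixed elementary bound. Your remark on reading the claim uniformly in the formula's alternation depth is a fair clarification of what the paper leaves implicit, but it does not change the argument.
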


\section{Related Work}
\label{sec:related}


There has been a lot of recent progress in automatically
\emph{verifying}~\cite{frs15,FinkbeinerMSZ-CCS17,10.1007/978-3-319-96145-3_8,10.1007/978-3-030-25540-4_7}
and \emph{monitoring}~\cite{ab16,Finkbeiner2019,bsb17,bss18,fhst18,sssb19,10.1007/978-3-030-17465-1_7} HyperLTL 
specifications.
HyperLTL is also supported by a growing 
set of tools, including the model checker MCHyper~\cite{frs15,10.1007/978-3-030-25540-4_7}, the  satisfiability checkers 
EAHyper~\cite{fhs17} and MGHyper~\cite{10.1007/978-3-030-01090-4_31}, and the runtime monitoring tool 
RVHyper~\cite{fhst18}.

Directly related to the repair problem studied in this paper are the 
satisfiability and synthesis problems. The \emph{satisfiability} problem for 
HyperLTL was shown to be 
decidable for the $\exists^*\forall^*$ 
fragment and for any fragment that includes a $\forall\exists$ quantifier alternation~\cite{fh16}. The hierarchy of hyperlogics beyond HyperLTL has been studied in~\cite{hierarchy}.

The \emph{synthesis} problem was shown to be undecidable in general, and 
decidable for the $\exists^*$ and $\exists^*\forall$ fragments. While the 
synthesis problem becomes, in general, undecidable as soon as there are two 
universal quantifiers, there is a special class of universal specifications, 
called the linear $\forall^*$-fragment, which is still decidable~\cite{fhlst18}. 
The linear $\forall^*$-fragment corresponds to the decidable \emph{distributed 
synthesis} problems~\cite{conf/lics/FinkbeinerS05}. 
The \emph{bounded synthesis} problem considers only systems up to a given bound on the number of states.
Bounded synthesis from hyperproperties is studied 
in~\cite{fhlst18,10.1007/978-3-030-25540-4_7}. Bounded synthesis has been successfully applied to small examples such as the dining cryptographers~\cite{journals/cacm/Chaum85}.

The problem of model checking hyperproperties for tree-shaped and acyclic graphs 
was studied in~\cite{bf18}. Earlier, a similar study of the impact of 
structural restrictions on the complexity of the model checking problem has 
also been carried out for LTL~\cite{kb11}.

For LTL, the complexity of the repair 
problem was studied independently in~\cite{bek09,ekb05,jgb05} and 
subsequently in~\cite{bk08} for distributed programs. The repair problem is also related to \emph{supervisory control}, where, for a given plant, a supervisor is constructed that selects an appropriate subset of the plant's controllable actions to ensure that the resulting behavior is safe~\cite{doi:10.1080/00207178608933645,187347,doi:10.1137/S0363012902409982}. 




\section{Conclusion and Future Work}
\label{sec:conclusion}

In this paper, we have developed a detailed classification of 
the complexity of the repair problem for hyperproperties 
expressed in HyperLTL. We considered general, acyclic, and tree-shaped Kripke 
structures. We showed that for trees, the complexity of the repair 
problem in the size of the Kripke structure does not go beyond \comp{NP}. The 
problem is complete for \comp{L}, \comp{P}, and \comp{NP} for 
fragments with only one quantifier alternation, depending upon the outermost 
quantifiers. For 
acyclic Kripke structures, the complexity is in \comp{PSPACE} (in the level of 
the polynomial hierarchy that corresponds to the number of quantifier 
alternations). The problem is \comp{NL-complete} for the 
alternation-free fragment. For general graphs, the problem is 
\comp{NONELEMENTARY} for an arbitrary number of quantifier alternations. 
For a bounded number $k$ of alternations, the problem is 
$(k{-}1)$-\comp{EXPSPACE-complete}. 
These results highlight a crucial insight to the repair problem compared 
to the corresponding model checking problem~\cite{bf18}. With the notable exception of trees, 
where the complexity of repair is \comp{NP-complete}, compared to 
the \comp{L-completeness} of model checking, the complexities of repair and model checking are largely aligned. 
This is mainly due to the fact that computing a repair can be done 
by identifying a candidate substructure, which is comparatively 
inexpensive, and then verifying its correctness. 

The work in this paper opens many new avenues for further research. An
immediate question left unanswered in this paper is the lower bound
complexity for the $\exists^*\forall^*$ fragment in acyclic and general
graphs. 
It would be interesting to see if the
differences we observed for HyperLTL carry over to other hyperlogics (cf. 
\cite{hierarchy,cfkmrs14,FinkbeinerMSZ-CCS17,ab18}). One could extend the 
results of this paper to the reactive setting, where the program interacts 
with the environment. And, finally, the ideas of this paper 
might help to extend popular synthesis techniques for general (infinite-state) 
programs, such as program sketching~\cite{s13} and syntax-guided 
synthesis~\cite{abjmrssstu13}, to 
hyperproperties.

\section*{Acknowledgments}

We would like to thank Sandeep Kulkrani, Reza Hajisheykhi (Michigan State 
University), and Shreya Agrawal (Google). The repair problem was 
originally inspired through our interaction with them. 
This work is sponsored in part by NSF SaTC Award 1813388. It was 
also 
supported by the German Research Foundation (DFG) as part of the Collaborative 
Research Center “Methods and Tools for Understanding and Controlling Privacy” 
(CRC 1223) and the Collaborative Research Center “Foundations of Perspicuous 
Software Systems” (TRR 248, 389792660), and by the European Research Council 
(ERC) Grant OSARES (No. 683300).

\bibliographystyle{plain}
\bibliography{bibliography}

\newcommand{\SortNoop}[1]{}
\begin{thebibliography}{10}

\bibitem{ab18}
E.~{\'{A}}brah{\'{a}}m and B.~Bonakdarpour.
\newblock {HyperPCTL}: {A} temporal logic for probabilistic hyperproperties.
\newblock In {\em Proceedings of the 15th International Conference on
  Quantitative Evaluation of Systems {QEST}}, pages 20--35, 2018.

\bibitem{ab16}
S.~Agrawal and B.~Bonakdarpour.
\newblock Runtime verification of $k$-safety hyperproperties in {H}yper{LTL}.
\newblock In {\em Proceedings of the {IEEE} 29th Computer Security Foundations
  (CSF)}, pages 239--252, 2016.

\bibitem{abjmrssstu13}
R.~Alur, R.~Bod{\'{\i}}k, G.~Juniwal, M.~M.~K. Martin, M.~Raghothaman, S.~A.
  Seshia, R.~Singh, A.~Solar{-}Lezama, E.~Torlak, and A.~Udupa.
\newblock Syntax-guided synthesis.
\newblock In {\em Formal Methods in Computer-Aided Design (FMCAD)}, pages 1--8,
  2013.

\bibitem{at17}
R.~Alur and S.~Tripakis.
\newblock Automatic synthesis of distributed protocols.
\newblock {\em {SIGACT} News}, 48(1):55--90, 2017.

\bibitem{bek09}
B.~Bonakdarpour, A.~Ebnenasir, and S.~S. Kulkarni.
\newblock Complexity results in revising {UNITY} programs.
\newblock {\em ACM Transactions on Autonomous and Adaptive Systems (TAAS)},
  4(1):1--28, January 2009.

\bibitem{bf18}
B.~Bonakdarpour and B.~Finkbeiner.
\newblock The complexity of monitoring hyperproperties.
\newblock In {\em Proceedings of the {IEEE} 31th Computer Security Foundations
  (CSF)}, pages 162--174, 2018.

\bibitem{bk08}
B.~Bonakdarpour and S.~S. Kulkarni.
\newblock Masking faults while providing bounded-time phased recovery.
\newblock In {\em International Symposium on Formal Methods (FM)}, pages
  374--389, 2008.

\bibitem{bss18}
B.~Bonakdarpour, C.~S{\'{a}}nchez, and G.~Schneider.
\newblock Monitoring hyperproperties by combining static analysis and runtime
  verification.
\newblock In {\em Proceedings of the 8th Leveraging Applications of Formal
  Methods, Verification and Validation (ISoLA)}, pages 8--27, 2018.

\bibitem{bsb17}
N.~Brett, U.~Siddique, and B.~Bonakdarpour.
\newblock Rewriting-based runtime verification for alternation-free
  {H}yper{LTL}.
\newblock In {\em Proceedings of the 23rd International Conference on Tools and
  Algorithms for the Construction and Analysis of Systems (TACAS)}, pages
  77--93, 2017.

\bibitem{journals/cacm/Chaum85}
D.~Chaum.
\newblock Security without identification: Transaction systems to make big
  brother obsolete.
\newblock {\em Commun. {ACM}}, 28(10):1030--1044, 1985.

\bibitem{cfkmrs14}
M.~R. Clarkson, B.~Finkbeiner, M.~Koleini, K.~K. Micinski, M.~N. Rabe, and
  C.~S{\'{a}}nchez.
\newblock Temporal logics for hyperproperties.
\newblock In {\em Proceedings of the 3rd Conference on Principles of Security
  and Trust ({POST})}, pages 265--284, 2014.

\bibitem{cs10}
M.~R. Clarkson and F.~B. Schneider.
\newblock Hyperproperties.
\newblock {\em Journal of Computer Security}, 18(6):1157--1210, 2010.

\bibitem{hierarchy}
N.~Coenen, B.~Finkbeiner, C.~Hahn, and J.~Hofmann.
\newblock The hierarchy of hyperlogics.
\newblock In {\em 34th Annual ACM/IEEE Symposium on Logic in Computer Science
  (LICS)}, 2019.

\bibitem{10.1007/978-3-030-25540-4_7}
N.~Coenen, B.~Finkbeiner, C.~S{\'a}nchez, and L.~Tentrup.
\newblock Verifying hyperliveness.
\newblock In Isil Dillig and Serdar Tasiran, editors, {\em Computer Aided
  Verification}, pages 121--139, Cham, 2019. Springer International Publishing.

\bibitem{ekb05}
A.~Ebnenasir, S.~S. Kulkarni, and B.~Bonakdarpour.
\newblock Revising {UNITY} programs: Possibilities and limitations.
\newblock In {\em On Principles of Distributed Systems (OPODIS)}, pages
  275--290, 2005.

\bibitem{et97}
K~Etessami.
\newblock Counting quantifiers, successor relations, and logarithmic space.
\newblock {\em Journal of Computer and System Sciences}, 54(3):400--411, 1997.

\bibitem{fh16}
B.~Finkbeiner and C.~Hahn.
\newblock Deciding hyperproperties.
\newblock In {\em Proceedings of the 27th International Conference on
  Concurrency Theory (CONCUR)}, pages 13:1--13:14, 2016.

\bibitem{10.1007/978-3-030-01090-4_31}
B.~Finkbeiner, C.~Hahn, and T.~Hans.
\newblock {MGHyper}: Checking satisfiability of {HyperLTL} formulas beyond the
  {$\exists^*\forall^*$} fragment.
\newblock In Shuvendu~K. Lahiri and Chao Wang, editors, {\em Automated
  Technology for Verification and Analysis}, pages 521--527, Cham, 2018.
  Springer International Publishing.

\bibitem{fhlst18}
B.~Finkbeiner, C.~Hahn, P.~Lukert, M.~Stenger, and L.~Tentrup.
\newblock Synthesizing reactive systems from hyperproperties.
\newblock In {\em Proceedings of the 30th International Conference on Computer
  Aided Verification (CAV)}, pages 289--306, 2018.

\bibitem{fhs17}
B.~Finkbeiner, C.~Hahn, and M.~Stenger.
\newblock {EAH}yper: Satisfiability, implication, and equivalence checking of
  hyperproperties.
\newblock In {\em Proceedings of the 29th International Conference on Computer
  Aided Verification (CAV)}, pages 564--570, 2017.

\bibitem{fhst18}
B.~Finkbeiner, C.~Hahn, M.~Stenger, and L.~Tentrup.
\newblock {RVHyper}: {A} runtime verification tool for temporal
  hyperproperties.
\newblock In {\em Proceedings of the 24th International Conference on Tools and
  Algorithms for the Construction and Analysis of Systems ({TACAS})}, pages
  194--200, 2018.

\bibitem{Finkbeiner2019}
B.~Finkbeiner, C.~Hahn, M.~Stenger, and L.~Tentrup.
\newblock Monitoring hyperproperties.
\newblock {\em Formal Methods in System Design}, Jun 2019.

\bibitem{10.1007/978-3-319-96145-3_8}
B.~Finkbeiner, C.~Hahn, and H.~Torfah.
\newblock Model checking quantitative hyperproperties.
\newblock In Hana Chockler and Georg Weissenbacher, editors, {\em Computer
  Aided Verification}, pages 144--163, Cham, 2018. Springer International
  Publishing.

\bibitem{FinkbeinerMSZ-CCS17}
B.~Finkbeiner, Ch. M\"uller, H.~Seidl, and E.~Zalinescu.
\newblock Verifying {S}ecurity {P}olicies in {M}ulti-agent {W}orkflows with
  {L}oops.
\newblock In {\em Proceedings of the 15th {ACM} Conference on Computer and
  Communications Security (CCS)}, pages 633--645, 2017.

\bibitem{frs15}
B.~Finkbeiner, M.~N. Rabe, and C.~S{\'{a}}nchez.
\newblock Algorithms for model checking {H}yper{LTL} and {H}yper{CTL}*.
\newblock In {\em Proceedings of the 27th International Conference on Computer
  Aided Verification (CAV)}, pages 30--48, 2015.

\bibitem{conf/lics/FinkbeinerS05}
B.~Finkbeiner and S.~Schewe.
\newblock Uniform distributed synthesis.
\newblock In {\em Proceedings of {LICS}}, pages 321--330. {IEEE} Computer
  Society, 2005.

\bibitem{gj79}
M.R. Garey and D.S. Johnson.
\newblock {\em Computers and Intractability: A Guide to the Theory of
  {NP}-Completeness}.
\newblock W. H. Freeman, New York, 1979.

\bibitem{gm82}
J.~A. Goguen and J.~Meseguer.
\newblock Security policies and security models.
\newblock In {\em Proceedings of the {IEEE} Symposium on Security and Privacy
  (S \& P)}, pages 11--20, 1982.

\bibitem{10.1007/978-3-030-17465-1_7}
C.~Hahn, M.~Stenger, and L.~Tentrup.
\newblock Constraint-based monitoring of hyperproperties.
\newblock In Tom{\'a}{\v{s}} Vojnar and Lijun Zhang, editors, {\em Tools and
  Algorithms for the Construction and Analysis of Systems}, pages 115--131,
  Cham, 2019. Springer International Publishing.

\bibitem{j89}
J.~Jacob.
\newblock On the derivation of secure components.
\newblock In {\em Proceedings of the {IEEE} Symposium on Security and Privacy
  (S \& P)}, page 242–247, 1989.

\bibitem{doi:10.1137/S0363012902409982}
S.~Jiang and R.~Kumar.
\newblock Supervisory control of discrete event systems with ctl* temporal
  logic specifications.
\newblock {\em SIAM Journal on Control and Optimization}, 44(6):2079--2103,
  2006.

\bibitem{jgb05}
B.~Jobstmann, A.~Griesmayer, and R.~Bloem.
\newblock Program repair as a game.
\newblock In {\em Computer Aided Verification (CAV)}, pages 226--238, 2005.

\bibitem{kb11}
L.~Kuhtz and B.~Finkbeiner.
\newblock Weak {Kripke} structures and {LTL}.
\newblock In {\em Proceedings of the 22nd International Conference on
  Concurrency Theory (CONCUR)}, pages 419--433, 2011.

\bibitem{LENGAUER199263}
T.~Lengauer and K.W. Wagner.
\newblock The correlation between the complexities of the nonhierarchical and
  hierarchical versions of graph problems.
\newblock {\em Journal of Computer and System Sciences}, 44(1):63 -- 93, 1992.

\bibitem{187347}
F.~{Lin}.
\newblock Analysis and synthesis of discrete event systems using temporal
  logic.
\newblock In {\em Proceedings of the 1991 IEEE International Symposium on
  Intelligent Control}, pages 140--145, Aug 1991.

\bibitem{McCullough:1987:GNI}
D.~McCullough.
\newblock Noninterference and the composability of security properties.
\newblock In {\em Proc. IEEE Symposium on Security and Privacy}, pages
  177--186, April 1988.

\bibitem{markus}
M.~N. Rabe.
\newblock {\em A Temporal Logic Approach to Information-flow Control}.
\newblock PhD thesis, Saarland University, 2016.

\bibitem{s13}
A.~Solar{-}Lezama.
\newblock Program sketching.
\newblock {\em Springer Journal on Software Tool for Technology Transfer
  {STTT}}, 15(5-6):475--495, 2013.

\bibitem{sssb19}
S.~Stucki, C.~S{\'{a}}nchez, G.~Schneider, and B.~Bonakdarpour.
\newblock Graybox monitoring of hyperproperties.
\newblock In {\em Proceedings of the 23rd International Symposium on Formal
  Methods (FM)}, 2019.
\newblock To appear.

\bibitem{doi:10.1080/00207178608933645}
J.~G. Thistle and W.~M. Wonham.
\newblock Control problems in a temporal logic framework.
\newblock {\em International Journal of Control}, 44(4):943--976, 1986.

\end{thebibliography}

\end{document}